\def \a {{{\sf{a}}}}
\def \qq {{\theta}}
\def \bu {{{\sf{b}}}}
\def \A {{{\sf{A}}}}
\def \B{{{\sf{B}}}}
\def \esp{\text{\sf{ESP}}}
\def\res{{{\text{\sf {R}}}}}
\def\q{q}
\def\rlp{\b{r}^{\scriptscriptstyle{\text{out}}}}
\def\rout{\b{r}^{\scriptscriptstyle{\text{out}}}}
\def\rr{\b{r}^{\scriptscriptstyle{\mathcal R}}}
\def\rrb{\b{r}^{\scriptscriptstyle{\mathcal R}}_{\bu}}
\def\name{{\sf{\small{Pro-LPR}}}}
\renewcommand{\comment}[1]{}
\newcommand{\Halmos}[0]{}
\newenvironment{manualtheorem}[1]{%
  \manualtheoreminner
}{\endmanualtheoreminner}
\definecolor{mygreen}{RGB}{20,120,60}
\title{LP-based Approximation for Personalized \\ Reserve Prices \footnotetext{\hspace{-0.2 mm}An earlier version of the paper has appeared at the ACM Conference on Economics and Computation (EC), 2019.}}
\author{Mahsa Derakhshan\thanks{ Research conducted in part while the author was an intern at Google.} \\ University of Maryland 
\and Negin Golrezaei  \\  MIT\and
Renato Paes Leme \\ 
Google Research
}
\date{}
\definecolor{mygreen}{RGB}{20,140,80}
\definecolor{mylightgray}{RGB}{230,230,230}
\newcounter{myalgctr}
\newenvironment{tbox}{
\vspace{0.2cm}
\begin{tcolorbox}[width=\textwidth,
                  enhanced,
%                  frame hidden,
%                  interior hidden,
                  boxsep=2pt,
                  left=1pt,
                  right=1pt,
                  top=4pt,
                  boxrule=1pt,
                  arc=0pt,
                  colback=white,
                  colframe=black,
                  breakable
                  ]%%
}{
\end{tcolorbox}
}
\newcommand{\tboxhrule}[0]{\vspace{0.1cm} \hrule \vspace{0.2cm}}
\renewcommand{\b}[1]{\ensuremath{\bm{\mathrm{#1}}}}
\DeclareMathOperator*{\E}{\mathbb{E}}
\newcounter{eqcounter}
\newtheorem{theorem}{Theorem}
\newtheorem*{theorem*}{Theorem}
\newtheorem{lemma}{Lemma}[section]
\newtheorem{corollary}[lemma]{Corollary}
\newtheorem{definition}[lemma]{Definition}
\definecolor{mygreen}{RGB}{20,100,60}
\algnewcommand{\IIf}[1]{\State\algorithmicif\ #1\ \algorithmicthen}
\algnewcommand{\EndIIf}{\unskip\ \algorithmicend\ \algorithmicif}
\newcommand{\apxf}[0]{\ensuremath{0.684}}
\newcommand{\rev}[2]{\ensuremath{\text{\sf{Rev}}}_{#1}(#2)}
\newcommand{\lps}[0]{\ensuremath{s^{\star}}}
\newcommand{\revision}[1]{{\textcolor{black}{#1}}}
\newenvironment{titledtbox}[1]{\begin{tbox}#1
%\vspace{0.1cm} 
\tboxhrule
%\vspace{0.05cm}
}{\end{tbox}}
\newcommand{\abs}[1]{\left\vert{#1}\right\vert}
\newcommand{\Rev}{\operatorname{Rev}}
\renewcommand{\Pr}{\operatorname{Pr}}
\newcommand{\optf}[2]{\ensuremath{\text{\sf OPT}(#1)}}
\newcommand{\optfone}[2]{\ensuremath{\text{\sf OPT}}^1(#1, #2)}
\newcommand{\optftwo}[2]{\ensuremath{\text{\sf OPT}}^2(#1, #2)}
\begin{document}
\maketitle

\begin{abstract}
  We study the problem of computing data-driven personalized
 reserve prices in eager second price auctions 
 without having any assumption on
  valuation distributions.
Here, the input is a data-set that contains the submitted bids of $n$ buyers
in a set of  auctions and the problem  is to return personalized reserve prices $\b
r$ that maximize the  revenue earned on these auctions by running  eager
  second price auctions with reserve $\b r$. {For this problem, which is known to be NP-hard, we present a novel LP
  formulation and a rounding procedure which achieves}
  a $(1+2(\sqrt{2}-1)e^{\sqrt{2}-2})^{-1} \approx \apxf$-approximation. This improves over the  $\frac{1}{2}$-approximation
algorithm due to Roughgarden and Wang. {We show that our analysis is
  tight for this rounding procedure. We also bound the integrality gap of the LP, which shows that it is impossible to design an algorithm with an approximation factor larger than $0.828$ with respect to this LP. 
  }
\end{abstract}

\section{Introduction}

{\em Second price (Vickrey) auctions with reserves} have been prevalent in many marketplaces such as online advertising markets \citep{chawla2014bayesian, paes2016field, golrezaei2017boosted}. A key parameter of this auction format is its {\em reserve price}, which is the minimum price at which the seller is willing to sell an item. While there is empirical and theoretical evidence that highlights the significance of setting {\em personalized reserve prices} for the buyers to maximize the revenue \citep{ edelman2007internet, ostrovsky2011reserve,beyhaghi2018improved}, we do not have a full understanding of how to optimize reserve prices. This problem is  only  fully solved  under the assumption that  buyers' valuation distributions are i.i.d. and regular, where  these  assumptions fail to hold in practice 
\citep{ celis2014buy,golrezaei2017boosted}.

We study the problem of optimizing personalized  reserve prices in 
second price auctions  when the buyer valuations can be correlated.   There are two different ways that personalized 
	reserve prices can be applied in  second price auctions: lazy and eager
	\cite{dhangwatnotai2015revenue}. In the lazy version,  we first determine the potential
	winner and  then apply the reserve prices. In the eager version,  we first
	apply the reserve prices and  then determine the winner. In this work, we focus on optimizing  eager reserve prices because  (i) while the optimal lazy
	reserve prices can be computed exactly  in polynomial time, they have worse
	revenue performance both in theory and practice, and (ii) eager reserves  perform better in
	terms of social efficiency for similar revenue levels \citep{paes2016field}.

To optimize the eager reserve prices, we take a data-driven approach as suggested in the literature  \citep{paes2016field, RW16}.  The input in this setting is a history of the buyers' submitted bids/valuations over multiple runs of an auction and the goal, roughly speaking, is to set a personalized reserve price $r_{\bu}$ for each buyer $\bu$ such that the total revenue obtained on the same data-set according to these reserve prices is maximized (see Section~\ref{sec:pre} for the formal definition).

\revision{The optimal  data-driven reserve prices solve an \emph{offline
optimization problem}, i.e., given a data-set of bid data, it computes the optimal 
reserve prices in retrospect. 
 {This approach is mainly inspired  by online advertising markets,  in which billions of second price auctions are run in a day by ad exchanges. This practice   provides ad exchanges a big data-set of submitted bids, which can be used to optimize personalized reserve prices for the next day via a data-driven approach. Setting personalized reserve prices, which is a common practice in this market, is inspired by targeting tactics in which advertisers take advantage of cookie-matching technologies to target Internet users based on their diverse  preferences. This then leads to heterogeneous valuation/bid distributions, which necessitates setting  personalized reserve prices.} }

\textbf{Prior Work and Our Results:} Unfortunately,  optimizing the data-driven reserve prices is APX-hard \citep{RW16}. The state-of-the-art algorithm of    \cite{RW16} achieves a $1/2$-approximation which itself improves over an earlier $1/4$-approximation algorithm by \cite{paes2016field}. 
Our main result is an algorithm with a significantly improved approximation factor. We show that there exists a randomized polynomial time algorithm that given a data-set, outputs a vector of reserve prices whose expected revenue is a \apxf{}-approximation of that of the optimal value. Our improved bound  results in a polynomial time $(\apxf -\epsilon)$-algorithm for independent distributions, which beats the best approximation known via prophet techniques.\footnote{While
we provide a better guarantee against the optimal reserves, our technique
does not provide approximation guarantees with respect to the optimal auction as
prophet inequalities do.}
Further, our result leads to a
$(\apxf - \epsilon)$-algorithm for the batch learning version of the problem.   
 \revision{In the batch learning setting,  
 there is a distribution over buyers'  valuations/bids  and the goal is to   compute the optimal prices  by  having  access to samples  from that distribution 
\citep{medina2014learning,huang2018making}. 
Using the machinery developed by 
\cite{morgenstern2015pseudo}, one can show that  via solving the data-driven offline
optimization problem on the data-set with $\Omega(\vert \B \vert \log \vert \B
\vert /\epsilon^2)$ auctions, we can obtain a $1-\epsilon$ fraction of the maximum
revenue of any eager second price auction
that one could have hoped to obtain by knowing the valuation
distributions.}

The known algorithms of the literature are all greedy and only take into account
the two highest bids in each auction. Another limitation of these  algorithms is
that the reserve price for each buyer is computed in isolation. That is,  the
reserve price for a buyer only depends on the bids of the auctions in which  the
buyer submits the highest bid. In fact, \cite{RW16} argue that these
limitations are precisely what prevent their algorithm from obtaining any
guarantee better than $1/2$. As we explain in more detail later, we bypass this bound by a careful  analysis of a rounding technique  for a natural linear programming formulation of the problem proposed in this work.

{\textbf{Our Techniques:}}  
  To obtain our improved approximation factor of \apxf{}, 
     we present an algorithm called ``\textbf{Pro}file-based
     \textbf{LP}-\textbf{R}ounding", \name{} for short,  that takes advantage of
     a concise representation of the solution space. This  representation, that
     we call profile space, is inspired by how revenue is computed in the eager
     auctions. Working with the profile space enables us to consider all the
     bids in
      an auction---not only the highest and second highest bids---to set the
      reserve prices. It further  allows  us to describe the optimal solution by a polynomial-size integer
      program.
	   	   By relaxing the integrality constraints on the variables of the integer program, we construct a linear program (LP). The fractional solution of the LP is then rounded to obtain the reserve prices.  
  The final reserve price of the algorithm is the best of the zero reserves and
  the reserves obtained from rounding the solution of the LP.  {The
  most technically challenging step in the analysis is to bound the
  approximation ratio. This is done via careful probabilistic analysis of the
  rounding procedure which leads to a non-linear mathematical program bounding
  the ratio. Our last step is to use techniques from non-linear optimization to
  bound the solution of the mathematical program.}  {We would like to emphasize that our analysis of our algorithm is tight  in a sense that there is an example for which our algorithm cannot get  an approximation factor better than   \apxf.}
 
{We point out  that  the performance of our algorithm is evaluated against the optimal value of the LP, which is an upper bound on the maximum revenue. By analyzing the integrality gap of the LP, we show that no algorithm can obtain more than a 0.828 fraction of the optimal value of the LP; see Theorem \ref{thm:integrality_gap}. This highlights that  our algorithm is evaluated  against a powerful benchmark and despite that,  it obtains \apxf~ fraction of this powerful benchmark.}

\textbf{Managerial Insights and Numerical Studies:} 
{Our proposed algorithm highlights  that there is significant value in considering all the submitted bids, not only the highest and second highest bids, to optimize reserve prices.  By considering all the submitted bids, the algorithm can better identify and take advantage of  the buyers' bidding behavior. Furthermore, the design of our algorithm accentuates the importance of  optimizing  reserve prices jointly. (Recall that the prior work focused on optimizing reserve of each buyer separately.)  Such  a joint optimization problem can capture potential correlation in the submitted bids, and hence improve revenue. 
{To illustrate this, we conduct numerical studies, where we compare our algorithm with the greedy algorithm of \cite{RW16}. As stated earlier, this greedy algorithm obtains the best approximation factor prior to our work. We show that when bids are positively or negatively correlated across buyers, (i) our algorithm  obtains at least a $0.98$ fraction of the optimal revenue, and (ii) in $50\%$ (respectively $75\%$) of the problem instances, 
 our algorithm outperforms the greedy algorithm by at least $6\%-7\%$ (respectively $2\%-3\%$). We obtain similar results  when bids are independent across buyers.} }  
 
{\subsection{ {Other Related Work}} \label{sec:related}
\revision{While we have discussed several closely related works, we further situate our work in the landscape
of related work. 
 We start with a more detailed comparison between our work and that of \cite{RW16}. 
 As stated earlier, the greedy nature of \cite{RW16}'s algorithm prevents it from obtaining an approximation factor better than $1/2$. However, the greedy nature of 
 their algorithm allows them to transform it into an online learning  algorithm with a sublinear approximate regret. The online learning algorithm receives a set of submitted bids to an auction and its goal to set personalized reserve prices based on collected feedback in past auctions.  \cite{RW16} design a learning algorithm under a full information setting, where the auctioneer observes all the bids after running an auction. Very recently, using Blackwell Approachability,
  \cite{niazadeh2020online} show how to transform a variation of the greedy algorithm of  \cite{RW16}  into its online counterpart under a bandit feedback structure. In this structure, after every auction, the auctioneer only observes the obtained revenue, rather than all the submitted bids. In this work, we only focus on the offline problem of optimizing reserve prices. Considering our improved approximation factor,  an interesting future research direction is to study how to transform the \name{} algorithm to its online counterpart with sublinear approximate regret.}

{Our work relates and contributes to the literature on the auctions and revenue-maximizing mechanisms in a single-item environment. Seminal contributions have been made by in  \cite{myerson1981optimal} under a critical  assumption that  buyers' valuations are independent of each other. Specifically, \cite{myerson1981optimal} shows that
  
  when buyers' valuation distributions are regular and i.i.d., the optimal mechanism can be implemented by running second price auctions with a non-anonymous reserve price.  However, 
even under the independence assumption, when valuation distributions are
irregular or heterogeneous, the optimal mechanism  has a rather complicated
structure and its implementation heavily relies on the knowledge of the
valuation distributions \citep{celis2014buy, roughgarden2016ironing, golrezaei2017boosted}.}

  {Considering the complexity of the optimal auction, there has been a growing body of  literature on designing simple
  yet effective auctions that can be easily optimized; see, for example,
  \cite{golrezaei2017boosted, celis2014buy, paes2016field, RW16,
  allouah2018prior, bhalgat2012online, beyhaghi2018improved,
  dhangwatnotai2015revenue}. Among those is second price auctions with reserve, which is very common in
practice \citep{beyhaghi2018improved, paes2016field, chawla2014bayesian}. This
auction format has a single parameter, called reserve price, per buyer that
determines  the minimum acceptable bid for the buyer. As stated earlier, to
extract high revenue from buyers, it is very crucial to effectively optimize
reserve prices. Such an optimization problem has been studied in different
settings. } 

\revision{If the value distributions are independent, an improved approximation
to personalized reserves are known via techniques like the correlation gap
\citep{chawla2010multi, yan2011mechanism} and prophet inequalities
\citep{KS78,HK81,azar2017prophet,esfandiari2017prophet,beyhaghi2018improved,correa2019prophet}
(to cite a few). The latest result is $0.669$-approximation by 
\cite{correa2019prophet}. Although those results are typically stated as an
approximation ratio with respect to the (stronger) Myerson revenue benchmark,
those are also the best-known approximation ratios with respect to the optimal
reserve prices for independent distributions.}

Another related stream of literature is on the design of auctions for
correlated distributions. This line of work was pioneered by  \cite{ronen2001approximating} and
\cite{ronen2002hardness}. The positive and negative results
were later improved by \cite{dobzinski2011optimal} and
\cite{papadimitriou2011optimal}. Our paper departs
from this line work in the sense that we do not try to approximate the optimal
incentive-compatible auction, but instead, we try to approximate the best auction in
the subclass of second price auctions with reserve prices, since this 
is the auction format adopted by most online marketplaces, including online display advertising markets.} \revision{To optimize reserve prices in second price auctions, as stated earlier, we take a data-driven approach. This approach allows us to
indirectly exploit potential correlations in the submitted bids. In online advertising, for example, such correlation exists when different buyers/advertisers value some features of ad impressions, in the same way, e.g., they all prefer showing their ads to users who tend to click more on ads. 
}

{Finally, we review some of the work that bears some resemblance to our work from a technical perspective. 
In our work, we change 
 the solution space to describe  the optimal solution  using a concise LP with a small integrity gap. A similar technique is used in different applications; see, for example, \cite{behnezhad2017polynomial, behnezhad2017faster, behnezhad2018spatio, behnezhad2019optimal}, \cite {ahmadinejad2019duels} and \cite{immorlica2011dueling}
 for the use of a similar technique in finding optimal strategies of Blotto, security games, and dueling games. 
  While sharing the general technique, each paper uses specific properties of their problem to design their alternative solution space.  Furthermore, 
there is another related line of work that design ``configuration" LPs for problems related to resource allocation and job scheduling,  \citep{polavcek2015quasi, svensson2012santa, asadpour2010approximation, bansal2019lift}. This line of work is initiated by  \cite{bansal2006santa}. We note that configuration LPs, unlike our LP, may not be polynomial in the input size.
}

{The rest of the paper is organized as follows. In Section \ref{sec:pre}, we
define the model. Section \ref{sec:results} presents a high-level view of the
results and techniques. In Section \ref{section:lp}, we provide our LP, which
will be used as our benchmark. In Section \ref{section:rounding}, we present the
\name{} algorithm and show its performance guarantee. Section
\ref{sec:gap} provides the proof of the integrality gap and Section
\ref{sec:tight} shows that our analysis is tight. Finally, we present the results of our numerical studies in Section~\ref{sec:numerical} and conclude in Section~\ref{sec:conclusion}.} 

\section{Preliminaries and Problem Statement} \label{sec:pre}
 There are $n$ buyers participating in a set of single-item eager second price auctions. Let $\A$ and $\B$ respectively denote the set of auctions and buyers. For any buyer $\bu\in \B$, and for any auction $\a\in \A$,  we are given
a non-negative number $\beta_{\a, \bu}$ 
which indicates the bid of buyer $\bu$ in auction $\a$. 
Let $r_{\bu}$ be the personalized reserve price of buyer $\bu\in \B$. Then, given
the bids $\{\beta_{\a, \bu}\}_{\bu \in \B}$ in auction $\a\in \A$ and  reserve
prices $\b r =\{r_{\bu}\}_{\bu\in \B}$, the eager second price ($\esp$) auction
works as follows. 

\hspace{-3mm} - First, any buyer $\bu$ with $\beta_{\a, \bu} < r_{\bu}$ is
  eliminated. Let $S_{\a}=\{\bu: \beta_{\a, \bu}\ge r_{\bu}\}$ be the set of buyers who clear their reserve prices in auction $\a$.
  
\hspace{-3mm} - When set $S_{\a}$ is nonempty, the item is  allocated to buyer
$\bu_{\a}^{\star} =\arg\max_{\bu\in S_{\a}}~\{\beta_{\a, \bu}\}$ who has the
    highest bid among all the buyers in set $S_{\a}$ and is charged
    $${\rev{\a}{\b r}}
:=\max\left\{r_{\bu_{\a}^{\star}},~ \max_{\bu\in S_{\a}, \bu\ne
    \bu_{\a}^{\star}}~\{\beta_{\a, \bu}\} \right\}\,.$$ {Note that $S_{\a}$ and
    $\bu_{\a}^{\star}$ implicitly depend on reserve prices $\b r$.} Any other buyer $\bu\in \B$,
$\bu\ne \bu_{\a}^{\star}$ is not charged. Further, when set $S_{\a}$ is empty,
the item is not allocated and ${\rev{\a}{\b r}} = 0$.

Note that the reserve prices are the same across all the auctions $\a \in \A$. However, each buyer $\bu$ is assigned a personalized reserve price $r_{\bu}$. Given the data-set of bids $\{\beta_{\a, \bu}\}_{\a\in \A, \bu\in \B}$, our goal here is to find personalized reserve prices that maximize revenue of the auctioneer. {See the introduction section for a discussion on the nice properties of this data-driven optimization.} 
 Formally, we would like to solve the following optimization problem:
 \begin{align}
   \esp^{\star}~=~\max_{\b r \in \mathbb{R}^{n}}~~ \rev{}{\b r} := \sum_{\a \in
   \A}\rev{\a}{\b r}\,.
\label{eq:opt} \tag{\small{\sf ESP-OPT}}
 \end{align} 
 Note that, without loss of generality, we  
 assume that the optimal reserve price for buyer $\bu$ is equal to one of his submitted bids $\{\beta_{\a, \bu}\}_{\a\in \A}$. Let $\res = \{0, \infty\} \cup\{\beta_{\a, \bu}\}_{\a\in \A, \bu\in \B}$. Then, Problem \ref{eq:opt} can be rewritten as 
$ \max_{\b r \in {\res^{n}}}~\sum_{\a \in \A}
{\rev{\a}{\b
r}}$, which leads to a search space of  size {$\abs{\res}^{n}$}. 
   
   {We now make a few remarks about the    data-driven optimization Problem \eqref{eq:opt}.}
   {The optimal solution to the   data-driven optimization Problem \eqref{eq:opt} gives  an approximation solution to the batch-learning setting. In this setting,   bids in each auction $\a$, i.e., $\{\beta_{\a, \bu}\}_{\bu\in \B}$, are   independent samples from the distribution
 $\mathcal{D}$.
By analyzing  the sample complexity of the auctions,  \cite{medina2014learning} and \cite{morgenstern2015pseudo} show that in the batch learning setting, 
 with probability $1-\delta$, it holds that:
$$\left\vert \E_{\boldsymbol{\beta} \sim \mathcal{D}}[\Rev(\boldsymbol{\beta}; \b r)] -  \frac{1}{m} \sum_{\a=1}^m \Rev(\{\beta_{\a, \bu}\}_{\bu\in \B}; \b r) \right\vert \leq O \left( \sqrt{\frac{n \log(m/\delta)}{m}} \right), $$
where with a slight abuse of notation, $\Rev(\boldsymbol{\beta}; \b r)$ is the revenue of an auction when bids and reserve prices are $\boldsymbol{\beta}$ and $\b r$, respectively. This implies that by having   $m = \Omega(n \log n / \epsilon^2)$ samples, with high probability, the optimal solution to  the   data-driven optimization Problem \eqref{eq:opt} provides  an $\epsilon$-additive approximation solution to the batch-learning problem when the data-set of bids is drawn from a distribution $\mathcal D$. 
}

{Motivating the data-driven problems via the   batch-learning problems implies that the bid distribution is the same across all the auctions. The consistency of the bid distribution across auctions, which is a common assumption  in the literature (e.g., \cite{paes2016field, RW16}),  can be justified  
when buyers
  submit their true valuations in auctions, i.e., when they are truthful. The truthfulness is, in fact, an appealing property of  (single-shot) second price auctions. In single-shot second price auctions,  buyers are only concerned about their utility in the current auctions and do not reason about  how their bids will affect future outcomes. In other words,  buyers are myopic, rather than being forward-looking/strategic.   Nonetheless, there is a new line of work that studies how to optimize reserve prices in (lazy) second price auctions when buyers are forward-looking and would like to maximize their cumulative utility; see, for example,  \cite{amin2013learning, amin2014repeated, kanoria2019incentive}, and \cite{golrezaei2018dynamic, golrezaei2019incentive}. At a high level, it has been shown that in lazy second price auctions, it is possible to effectively optimize reserve prices even when buyers are forward-looking. We believe that some of the techniques developed for lazy second price auctions can be applied to eager second price auctions as well. Investigating this  claim  is indeed an interesting future research direction.     
 }

\section{Results and Techniques}\label{sec:results}
The main result of the paper is a randomized algorithm that returns an \apxf{}-approximation solution for Problem \ref{eq:opt}.

\begin{theorem}[Main Theorem]\label{thm:main}
There exists a randomized polynomial time algorithm that given a data-set $\{\beta_{\a, \bu}\}_{\a\in \A, \bu\in \B}$, outputs a vector of eager reserve prices whose expected revenue is a \apxf{}-approximation of that of the optimal value  of Problem  \ref{eq:opt}, denoted by $\esp^{\star}$.
\end{theorem}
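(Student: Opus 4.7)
The plan is to follow the \name{} strategy foreshadowed in the introduction: relax Problem~\ref{eq:opt} to a polynomial-size linear program via a profile-based reformulation, round the LP optimum by independent sampling across buyers, and output the better of the rounded reserve vector and the all-zero reserves $\b r = \b 0$. The target is to show that the expected revenue of the randomized output is at least $\apxf$ times the LP value, which in turn dominates $\esp^{\star}$, the upper bound on the revenue of any reserve vector.

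First I would encode per-buyer information by introducing \emph{profiles}. Since it suffices to search over $r_{\bu}\in \res$, each buyer admits polynomially many candidate reserves; a profile $p$ for buyer $\bu$ augments a reserve with, for every auction $\a\in \A$, a role (eliminated, clearer, runner-up, or winner) compatible with that reserve and with the recorded bids. With variables $x_{\bu,p}\in[0,1]$ satisfying $\sum_p x_{\bu,p}=1$, I would write per-auction linking constraints (at most one winner per auction, the winner's charge is consistent with the runner-up's bid, a role of ``winner in $\a$'' requires all higher bidders to be ``eliminated in $\a$'', etc.) and maximize a linear objective that accumulates per-profile revenue contributions. The LP has polynomial size and any integral feasible solution recovers a reserve vector of matching revenue, so its optimum is at least $\esp^{\star}$.

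Second, given the LP optimum $\{x_{\bu,p}\}$, I would sample a profile independently for each buyer $\bu$ with probability $x_{\bu,p}$ and use the associated reserve as $r_{\bu}$. Independence is the key device, and also the source of the analytical difficulty: whether a buyer $\bu$ that the LP expected to win $\a$ actually wins depends on samples drawn for all other buyers. For each pair $(\bu, p)$ I would lower bound the probability that (i)~$\bu$ clears its reserve, (ii)~no stronger competitor clears simultaneously, and (iii)~the price charged meets the LP target. By independence across buyers these events factor into products of LP marginals, and I can uniformly lower bound the products with a correlation-gap style inequality of the form $\prod_{\bu'}(1-y_{\bu'}) \geq 1 - (1-e^{-\sum_{\bu'} y_{\bu'}})$; summing over $(\bu,p)$ then yields an explicit lower bound on the expected rounded revenue in terms of the LP variables.

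Third, the zero-reserve branch exists to repair the bad cases: if the LP concentrates mass on high-reserve profiles that rarely survive rounding, zero reserves already collect substantial revenue because every buyer clears. Taking the maximum of the two options, the worst-case ratio is controlled by an auction-wise non-linear program whose variables are the LP marginals of who wins, and whose objective balances the exponential survival factor from the previous step against the zero-reserve guarantee. The main obstacle will be solving this program analytically. I would decompose it auction by auction (legal since both the LP value and both revenue lower bounds decompose additively over $\a$), reduce it by convexity arguments to a one- or two-parameter optimization, and use first-order optimality to show the minimum equals $(1+2(\sqrt{2}-1)e^{\sqrt{2}-2})^{-1}=\apxf$. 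Since the LP has polynomial size and the independent sampling runs in linear time, this would complete the proof of Theorem~\ref{thm:main}.
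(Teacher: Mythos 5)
Your high-level architecture (a profile-based LP relaxation, independent per-buyer rounding, taking the better of the rounded reserves and the all-zero reserves, and a final non-linear program for the constant) matches the paper, but two load-bearing ideas are missing or wrong. First, the LP. The paper's profile is a \emph{per-auction} object: a tuple $(\bu_1,\bu_2,r_1,r_2)$ recording only the highest and second-highest cleared buyers and their reserves, with revenue $\max(\beta_{\a,\bu_2},r_1)$; the LP has one distribution $s_{\a,\cdot}$ over such pairs for each auction, linked to per-buyer reserve marginals $q_{\bu,r}$. Your profile is a \emph{per-buyer} object carrying a role in every auction, which is exponentially large as stated; and if you instead linearize to per-auction role variables, the revenue (``the winner's charge is consistent with the runner-up's bid'') is inherently a function of the (winner, runner-up) \emph{pair}, so it cannot be expressed linearly in single-buyer role variables. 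The joint two-buyer profile is precisely the device that makes the objective linear while keeping the LP polynomial; without it the relaxation you describe does not exist.

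Second, the rounding analysis. You lower bound, for each candidate winner, the probability that the buyer clears and no stronger competitor clears; but in an eager second-price auction revenue at least $t$ also arises when \emph{two} buyers with bids at least $t$ both clear (the runner-up sets the price), and your event decomposition does not capture this mechanism. The paper instead compares, for every threshold $t$, the quantity $\Pr[\Rev_{\a}(\rr)\ge t]$ with the LP mass on profiles of revenue at least $t$, splitting into $t>\beta^{(2)}_{\a}$ (where the gap is nonpositive) and $t\le\beta^{(2)}_{\a}$ (where the gap is at most a constant $c$, so that after integrating over $t$ the loss is $c\cdot\beta^{(2)}_{\a}$, which the zero-reserve branch absorbs); the event $\{\Rev_{\a}(\rr)\ge t\}$ is written as the union of ``some cleared buyer has reserve at least $t$'' and ``at least two buyers with bids at least $t$ clear.'' Also, your correlation-gap inequality $\prod_{\bu'}(1-y_{\bu'})\ge e^{-\sum_{\bu'}y_{\bu'}}$ points the wrong way: since $1-y\le e^{-y}$ the product is \emph{at most} the exponential, and no lower bound of that form can hold (take some $y_{\bu'}=1$). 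Finally, the evaluation of the non-linear program to $2(\sqrt{2}-1)e^{\sqrt{2}-2}$ is the technical heart of the paper (a rearrangement lemma for moving buyers between two products, followed by a KKT case analysis over the support structure of the optimizer); asserting that first-order conditions yield this number is not yet a proof of the stated constant.
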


To find an approximate solution, the overall idea is to  construct an LP whose objective function at its optimal solution provides an upper bound for $\esp^{\star}$. The LP that takes advantage of a concise representation of the solution space, has a polynomial number  of variables and constraints. 
Then, we use a rounding technique to transform the optimal solution of the LP to
a vector of reserve prices. We show that if we consider the reserve prices
obtained from the rounding technique  and the vector of all-zero reserve prices
and choose the one with the maximum revenue, we obtain the desired
approximation factor. {In Theorem \ref{thm:tight}, we further show that our
analysis of our approximation factor is tight. That is, we provide an example
for which our algorithm obtains exactly   \apxf{} fraction of the optimal value
of the LP, i.e., the upper bound on for $\esp^{\star}$.
Finally, in Theorem \ref{thm:integrality_gap}, we bound the integrality
gap of the LP. This characterization shows that no algorithm can obtain more
than $0.828$ fraction of the LP.}

\section{Linear Program} \label{section:lp}
{The main challenge in designing an LP formulation for this problem is to
find a concise representation of the solution space. Instead of considering all
possible assignments of reserves to buyers, we will consider only partial
assignments in which we only specify the reserve prices of two buyers. We will
call such partial assignment a \emph{profile}. Formally, a profile is a tuple $p =
(\bu_1, \bu_2, r_1, r_2) \in \B \times \B \times \res \times \res$, which
represents an assignment of reserve $r_1$ to buyer $\bu_1$ and reserve $r_2$ to
buyer $\bu_2$. If it is the case that the reserves are below the corresponding
bids in an auction $\a$, i.e. $r_1 \leq \beta_{\a,\bu_1}$ and $r_2 \leq
\beta_{\a,\bu_2}$, then no matter how the assignment of the remaining reserves,
the revenue of this partial assignment is at least $\max\{ r_1,
\beta_{\a,\bu_2}\}$ for $\beta_{\a,\bu_1} \geq \beta_{\a,\bu_2}$.}
We also note that given any vector of reserve prices $\b r$, the
revenue that can be obtained from $\b r$ only depends on the
reserve price of  the highest and second highest bidders that clear the
reserve prices.

Next, we   formally define the notion of \emph{valid profile} and show that the
Problem \eqref{eq:opt}  can be relaxed to {find the best consistent distribution  over 
valid profiles in each auction.} {To define valid profiles, we assume that in each auction $\a$, we have two auxiliary buyers $\bu_0$ and $\bu_{00}$ who always bid zero. That is,  $\bu_{00},\bu_0 \in \B$, and $\beta_{\a,\bu_{0}} = \beta_{\a,\bu_{00}}=0$ for any $\a\in \A$.}

\begin{definition}[Valid Profiles]\label{def:profile}
  {We define the set of valid profiles for auction $\a$ as the set
  $\mathcal P_a$ consisting of all tuples $(\bu_1, \bu_2, r_1, r_2) \in \B
  \times \B \times \res \times \res$}
   satisfies the following conditions:
\begin{enumerate}
\item Bid of buyer $\bu_1$ is greater than or equal to that of buyer $\bu_2$;
  that is, $\beta_{\a, \bu_1} \ge \beta_{\a,\bu_2}$. 
\item Buyer $\bu_1$ clears his reserve; that is, $\beta_{\a,\bu_1} \geq r_1$.
\item Buyer $\bu_2$ clears his reserve; that is, $\beta_{\a,\bu_2} \geq r_2$.
\end{enumerate}
For any given profile  $p \in \mathcal P_a$, we define  $\rev{\a}{p}
  := \max(\beta_{\a, \bu_2}, r_1)$.
\end{definition}

{We note that any valid profile  corresponds to at least one vector of reserve prices.  To see why,  observe  that we can always  obtain $p=(\bu_1, \bu_2, r_1, r_2)$ by setting $r_{\bu_1} = r_1$, $r_{\bu_2} = r_2$, and $r_{\bu} = \infty$ for any  $\bu\ne \bu_1, \bu_2$. Of course, there may exist other vectors of reserve prices that lead to the same profile. We note that by adding buyers $\bu_0$ and $\bu_{00}$ to $\B$, we can  define valid profiles to  represent the cases   in which  less than two buyers cleared their reserve prices. We present the cases with    one (respectively zero)
cleared buyer with  valid profile of $(\bu_1, \bu_0, r_1, 0)$ (respectively $(\bu_0, \bu_{00}, 0,0)$). }

\begin{definition}[Profiles Associated with Reserve Prices]\label{def:profile-association} Given a vector of reserve prices $\b r$ we say a valid profile $p = (\bu_1, \bu_2, r_1, r_2)$ is the unique profile associated with $\b r$ in an auction $\a \in \A$ if and only if the following condition hold. After applying the reserve prices $\b r$, buyer $\bu_1$ with reserve $r_1$
  and buyer $\bu_2$ with reserve $r_2$ have the highest and second highest
  cleared bids in auction $\a$, respectively.
\end{definition}

Given a vector of reserve prices $\b r$ and an auction $\a$,   let $p$ be the profile associated with $\b r$ in $\a$.  Then, with a slight abuse of notation, we define $\rev{\a}{\b
  r}= \rev{\a}{p}$.

\medskip

 We are now ready to describe our LP. The LP will have two sets of variables:
\begin{enumerate}
\item 
For any auction $\a\in \A$ and any valid profile $p\in \mathcal P_{\a}$, define
    {a variable $s_{\a, p} \geq 0$ such that $\sum_{p \in \mathcal P_{\a}} s_{\a,
    p} \leq 1$.} This variable represents a probability distribution over valid profiles 
    in auction $\a$. {We refer to $\{s_{\a, p}| \a \in\A, p\in  \mathcal P_{\a} \}$ as a profile-weight.}
\item  For any buyer $\bu\in \B$ and reserve price $r\in \res$, {define a variable
  $\q_{\bu, r} \geq 0$ such that $\sum_{r \in \res} \q_{\bu, r} = 1$. This
    variable represents be the probability} that buyer $\bu$ is assigned a
    reserve price of $r$. 
\end{enumerate}

We now discuss the LP constraints. {We add constraints relating $s_{\a, p} $ and $\q_{\bu, r} $
which will ensure the consistency of probability distributions across all profiles.
To define this set of constraints,  for every $\bu \in \B$, $\a \in \A$, 
and $r \in \res$, we define  set}
 \begin{align}\label{eq:Qbr}\mathcal{Q}_{\bu,r,\a} := \{p= (\bu_1, \bu, r_1, r):
 p\in \mathcal{P}_{\a}\} \cup \{p =(\bu, \bu_2, r, r_2): p\in
 \mathcal{P}_{\a}\}\,,\end{align} 
{which corresponds to all valid profiles of auction $\a$ that assign reserve
$r$ to buyer $\bu$. A natural constraint to add is that the total probability
assigned to profiles in $\mathcal{Q}_{\bu,r,\a} $ is at most the probability
that buyer $\bu$ is assigned to reserve price  $r$. That is,
$$\sum_{p \in \mathcal{Q}_{\bu,r,\a}} s_{\a,p} \leq q_{\bu,r}\,.$$
Finally,  we can put it all together in the following LP:}
  \begin{align}
		&\max_{\b q, \b s} &&\ \sum_{\a\in \A}\sum_{p\in \mathcal{P}_{\a}} s_{\a,p} \cdot \rev{\a}{p}\nonumber	 &&\\\nonumber
		&\text{s.t.} \qquad 
		& & \sum_{p \in \mathcal{P}_{\a}} s_{\a,p} \leq 1 && \forall{\a}: \a \in \A\\\nonumber
    & && \sum_{p\in \mathcal{Q}_{b,r{,\a}}} s_{\a,p} \leq \q_{\bu,r} \qquad && \forall{\a, \bu, r}: \bu\in \B, r\in \res, \a\in \A\\ \nonumber
		& && \sum_{r\in R} \q_{\bu,r} {=} 1 && \forall{\bu}: \bu \in \B	\\
		&&& s_{\a,p} \geq 0  &&  \forall{\a, p} :  \a \in \A, \ p \in \mathcal{P}_{\a}\label{LP} \tag{\text{\sf{\small{Profile-LP}}}}
  \end{align}
 We start by
noting that the LP is a relaxation of the Problem \eqref{eq:opt}:

\begin{lemma}[Upper bound on Revenue]\label{lemma:lp}
  The solution of \ref{LP} is an upper bound to $ \esp^{\star}$, i.e., 
  the optimal value of Problem \ref{eq:opt}.
  That is, 
  \[{\esp^{\star}}\le {\ref{LP}} \,.\]
\end{lemma}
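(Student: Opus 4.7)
I will exhibit a feasible solution to \ref{LP} whose objective value is exactly $\esp^{\star}$, which immediately implies $\esp^{\star} \leq \ref{LP}$. The construction is the natural one: take the optimal reserve vector $\b r^{\star}$ for Problem~\eqref{eq:opt} and turn it into a deterministic (integral) LP assignment.

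\textbf{Construction.} For each buyer $\bu \in \B$ set $\q_{\bu, r^\star_\bu} = 1$ and $\q_{\bu,r} = 0$ for all $r \neq r^\star_\bu$. For each auction $\a \in \A$, let $p_\a^\star = (\bu_1, \bu_2, r^\star_{\bu_1}, r^\star_{\bu_2})$ be the unique profile associated with $\b r^\star$ in $\a$ in the sense of Definition~\ref{def:profile-association}, where I use the auxiliary bidders $\bu_0$ and $\bu_{00}$ in the slots corresponding to the second-highest or both cleared bidders when fewer than two bidders of $\B$ clear their reserves (as the paper describes just after Definition~\ref{def:profile-association}). Set $s_{\a, p_\a^\star} = 1$ and $s_{\a, p} = 0$ for every other $p \in \mathcal P_\a$.

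\textbf{Feasibility check.} The constraint $\sum_{p \in \mathcal P_\a} s_{\a, p} \leq 1$ holds with equality, and $\sum_{r \in \res} \q_{\bu, r} = 1$ holds by construction. The nontrivial constraint is $\sum_{p \in \mathcal{Q}_{\bu, r, \a}} s_{\a, p} \leq \q_{\bu, r}$. For any $(\bu, r, \a)$, the sum on the left is either $0$ or $1$, and it equals $1$ only when $p_\a^\star \in \mathcal{Q}_{\bu, r, \a}$. By the definition of $\mathcal{Q}_{\bu, r, \a}$ in~\eqref{eq:Qbr}, that membership forces $\bu \in \{\bu_1, \bu_2\}$ and $r = r^\star_\bu$, in which case $\q_{\bu, r} = 1$ and the constraint is satisfied. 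Otherwise both sides are $0$.

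\textbf{Matching the objective.} By Definition~\ref{def:profile} the profile revenue is $\rev{\a}{p_\a^\star} = \max(\beta_{\a, \bu_2}, r^\star_{\bu_1})$, which is precisely the ESP revenue extracted from auction $\a$ under $\b r^\star$ (with the convention that the auxiliary bidders contribute revenue $0$ when fewer than two real bidders clear, matching the case $\rev{\a}{\b r^\star} = 0$). Summing over $\a$, the LP objective evaluated at this solution equals $\sum_\a \rev{\a}{\b r^\star} = \esp^\star$, concluding the proof. There is no real obstacle here; the only care needed is in handling the edge cases where fewer than two bidders clear, which is why the auxiliary bidders $\bu_0, \bu_{00}$ were introduced in the first place.
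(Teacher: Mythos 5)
Your proposal is correct and follows essentially the same route as the paper: both construct a feasible LP solution from the optimal reserve vector $\b r^{\star}$ by putting all the profile weight on the associated profile in each auction and all the $\q$ mass on $r^{\star}_{\bu}$ for each buyer, then observe the objective equals $\esp^{\star}$. The paper states the feasibility verification is straightforward and omits it; you spell it out (including the auxiliary-buyer edge cases), which is a fuller write-up of the same argument.
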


\begin{proof} Given reserve prices $\b r^{\star}$ such that $\esp^{\star} = \sum_{\a} \rev{\a}{\b
  r^\star}$, we  construct a feasible solution to the LP as
  follows.  For each $\a \in \A$, we let $s_{\a,p}=1$ for the profile $p$
  corresponding to $\b r^{\star}$ (according to Definition  \ref{def:profile-association})
  and
  $s_{\a,p} =0$ for all remaining profiles. Further, we let $q_{\bu, r^{\star}_{\bu}} =
  1$ and $q_{\bu, r} = 0$ for all remaining reserves. It is straightforward to
  verify that it is a feasible solution to the \ref{LP} and that $ \sum_{\a\in
  \A}\sum_{p\in \mathcal{P}_{\a}} s_{\a,p} \cdot \rev{\a}{p} = \esp^{\star}$.
 \end{proof}

\begin{theorem}[Integrality Gap of \ref{LP}] \label{thm:integrality_gap}
{There exists a data-set of bids $\{\beta_{\a, \bu}\}_{\a\in \A, \bu\in \B}$ for which the integrality gap of the LP is at least $0.828$. That is, 
\[ {\esp^{\star}}\le 0.828\cdot ({\ref{LP}}) \,.\]}
\end{theorem}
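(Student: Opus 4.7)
The plan is to exhibit a specific family of instances on which $\text{ESP}^{\star} \leq 0.828 \cdot \ref{LP}$, with the ratio approaching $2(\sqrt{2}-1)$ in the limit. I would proceed in three steps: design the instance, lower bound $\ref{LP}$ via an explicit feasible fractional solution, and upper bound $\text{ESP}^{\star}$ via case analysis on integer reserve assignments.

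For the instance, I would use two buyers $A, B$ with bids in $\{0,\ell,h\}$, with $h/\ell$ calibrated to a value involving $\sqrt{2}$. The auctions come in three carefully chosen types: $n_1$ ``$A$-singletons'' ($\beta_A=h,\beta_B=0$), $n_2$ ``$B$-singletons'' ($\beta_A=0,\beta_B=h$), and $m$ ``shared'' auctions ($\beta_A=\beta_B=\ell$). To lower bound $\ref{LP}$, I would exhibit a feasible solution with marginals $q_{b,h}=\alpha, q_{b,\ell}=1-\alpha$ for each $b \in \{A,B\}$, and compute the maximum per-auction contribution subject to those marginals: each singleton attains $h\alpha + \ell(1-\alpha)$ by mixing profiles $(b,0,h,0)$ and $(b,0,\ell,0)$, while each shared auction attains $\ell \cdot \min(1, 2(1-\alpha))$ by combining the profiles $(A,B,\ell,\ell),(A,0,\ell,0),(B,0,\ell,0)$ whose weights can be set to saturate both buyers' $\ell$-marginals. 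Summing and optimizing over $\alpha$ gives an explicit lower bound on $\ref{LP}$ as a closed-form expression in $n_1, n_2, m, h, \ell$.

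To upper bound $\text{ESP}^{\star}$, I would use the standard fact that in any optimal integer solution each $r_b$ lies in the bid set $\res$, reducing to at most nine candidate assignments $(r_A, r_B) \in \{0,\ell,h\}^2$. Each assignment yields revenue linear in the multiplicities, which I would bound in closed form (e.g.\ $r_A=r_B=h$ yields $n_1 h + n_2 h$; $r_A=r_B=\ell$ yields $(n_1+n_2+m)\ell$; mixed $r_A=h, r_B=\ell$ yields $n_1 h + (n_2+m)\ell$). Taking the maximum gives the integer upper bound. Finally, I would tune $h/\ell$, $n_1/n_2$, and $m$ so that the ratio $\text{ESP}^{\star}/\ref{LP}$ attains $2(\sqrt{2}-1)$; the first-order conditions of the resulting bi-level optimization should produce $\sqrt{2}$ at the critical point naturally, since the bound has the form of $(\sqrt{2}+1)/2$ arising from balancing the high-reserve and low-reserve contributions.

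The main obstacle is the instance design itself: in the simplest symmetric two-buyer families one typically finds that an asymmetric integer assignment such as $r_A=h, r_B=\ell$ already matches the LP exactly, eliminating the gap. The construction must therefore be chosen carefully---with unequal auction multiplicities ($n_1 \neq n_2$), additional intermediate bid levels on one buyer, or possibly more than two buyers---so that no integer assignment, including asymmetric and mixed ones, reaches $\ref{LP}$. A secondary concern is certifying that the exhibited fractional solution is LP-optimal rather than merely feasible, which can be addressed by producing a matching LP-dual certificate of the same value.
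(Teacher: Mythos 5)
There is a genuine gap here, and it is exactly the obstacle you flag at the end but do not overcome: a two-buyer construction of the type you describe cannot produce the required integrality gap, and in fact your specific family has \emph{no} gap at all. For two buyers with bid levels $\{0,\ell,h\}$ and the three auction types you propose, fix marginals $q_{A,h}=\alpha$, $q_{B,h}=\beta$ (mass on reserve $0$ is dominated by reserve $\ell$ here). Your own accounting gives the LP value
$n_1\bigl(\alpha h+(1-\alpha)\ell\bigr)+n_2\bigl(\beta h+(1-\beta)\ell\bigr)+m\ell\min\bigl(1,\,2-\alpha-\beta\bigr)$,
which is a piecewise-linear \emph{concave} function of $(\alpha,\beta)$ whose maximum over $[0,1]^2$ is attained at one of the four corners $(0,0),(1,0),(0,1),(1,1)$ --- and these correspond precisely to the integral assignments $(\ell,\ell),(h,\ell),(\ell,h),(h,h)$. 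So unequal multiplicities $n_1\neq n_2$ do not help; the ratio is $1$ for every choice of $n_1,n_2,m,h,\ell$. More fundamentally, the target value $2(\sqrt2-1)$ does not come from balancing two reserve levels on two buyers: in the paper it arises as $\min_{\lambda}\frac{1+\lambda^2}{1+\lambda}$, and the $\lambda^2$ in the numerator is produced by a \emph{quadratic-versus-linear} tension that needs $n\to\infty$ buyers. Concretely, the paper uses one singleton auction per buyer $\bu$ with a large bid $\lambda n$, plus, for every \emph{pair} of buyers, a block of pairwise auctions of total value $1$ in which both bid a vanishing amount $\delta$. Setting $t$ reserves high earns $t\lambda n$ from singletons but kills $\binom{t}{2}$ of the $\binom{n}{2}$ pairwise blocks, giving $\esp^{\star}=\frac{1+\lambda^2}{2}n^2+o(n^2)$, whereas the fractional solution puts weight $1/2$ on ``high'' for every buyer, collects half of every singleton \emph{and} all of every pairwise block (by splitting each block's profile mass between its two buyers), giving $\frac{1+\lambda}{2}n^2+o(n^2)$; optimizing $\lambda=\sqrt2-1$ yields $0.828$. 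Your parenthetical ``possibly more than two buyers'' is not a secondary fix --- it is the essential ingredient, and without the pairwise-conflict gadget the first-order conditions you hope will ``produce $\sqrt2$ naturally'' have nothing to act on.

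Two smaller remarks. First, your concern about certifying LP-optimality of the fractional solution is unnecessary: for a lower bound on the integrality gap you only need a \emph{feasible} fractional solution (it lower-bounds the LP value) together with an upper bound on $\esp^{\star}$; the paper does exactly this, writing ${\sf LP}^\star\geq\sum_{\a}\Rev_{\a}(\mathbf{s})$ with no dual certificate. Second, your reduction of the integral optimum to finitely many candidate assignments is fine in spirit, but in the many-buyer construction the right parametrization is by the \emph{number} $t$ of buyers assigned the high reserve (by symmetry), after which $\esp^{\star}=\max_t\bigl[t\lambda n+\binom{n}{2}-\binom{t}{2}\bigr]$ is a one-dimensional quadratic maximization rather than a case analysis over reserve pairs.
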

{The proof of Theorem \ref{thm:integrality_gap} is given in Section \ref{sec:gap}.}

\section{Profile-based LP-rounding (Pro-LPR) Algorithm}
\label{section:rounding}

In this section, we present an algorithm, called Profile-based LP-rounding
(\name{}),  that uses the optimal solution of   (\ref{LP}), $\b{\lps}$, to
devise reserve prices. The algorithm is presented below.
\medskip
\medskip

\begin{center}
\fbox{ 
\begin{minipage}{0.9\textwidth}
{
\parbox{\columnwidth}{ \vspace{1em}
\textbf{Profile-based LP-rounding
(\name{}) Algorithm:}\vspace{0.5em}

{Let 
${\b{\lps}}$ and $\b{q^{\star}}$ be the optimal solution of (\ref{LP}). Then,}

\begin{itemize}
  \item { {Rounding procedure: For}
     each buyer $\bu\in \B$,   independently  sample reserve price $r \in \res$ with probability
    proportional to $q_{\bu, r}^{\star}$.}  \item {{Let $\b{z}$ be the vector of all zero reserves.} Output the best of $\rr$
    and $\b{z}$, i.e.,  $$\rlp = \arg \max_{\b{r} \in  \{ \b{z}, \rr\}}
    \rev{}{\b{r}}\,.$$}
\end{itemize}
}}
\end{minipage}
}
\end{center} 
\medskip
\medskip
{In the \name{} algorithm, we first round the optimal solution of the  (\ref{LP}) to construct reserve prices $\rr$. To do so, for each buyer $\bu\in \B$, we independently sample reserve price $r\in \res$ with probability $q_{\bu, r}^{\star}$, where $\b{q^{\star}}$ (and ${\b{\lps}}$) is the optimal solution of  the  (\ref{LP}). We then compare revenue under $\rr$ with that under  the zero reserve prices, and return the one that obtains higher revenue. The retuned vector of reserve prices is denoted by $\rout$.}

{We now proceed to analyze our algorithm. We  show that $\E[\rev{}{\rlp}]$
is at least a $\apxf$ fraction of the solution of the \ref{LP} and hence at
most $\apxf \cdot \esp^{\star}$, where the expectation is with respect to the randomness in the \name{} algorithm. As we show in Lemma \ref{lem:initial}, one of the biggest strengths of our LP
formulation is that it allows the analysis to decouple the effect of rounding for
each individual auction.} {In this lemma, roughly speaking, we present two conditions under which the \name{} algorithm has a good performance. In these  conditions, for each auction $\a\in \A$ and $t\ge 0$, we compare  the probability that $\Rev_{\a}(\rr)$ is at least $t$, i.e., $\Pr \left[ \Rev_{\a}(\rr) \geq t \right]$, with $\sum_{\{p: p \in \mathcal P_{\a}, \Rev(p) \geq t\}} s_{\a,p}^{\star} $, which is  the sum of the optimal weight of (valid) profiles in auction $\a$ that obtains a revenue of at least $t$. Here, $\Rev_{\a}(\rr)$ is the revenue in auction $\a$ under reserve prices $\rr$. Intuitively, the smaller the gap between $\Pr \left[ \Rev_{\a}(\rr) \geq t \right]$ and the aforementioned summation, the better the \name{} algorithm performs. 
 Lemma \ref{lem:initial} makes this statement  formal 
by   considering  the high revenue 
case of $t\ge \beta^{(2)}_{\a}$ (first condition) and   the low revenue case of $t< \beta^{(2)}_{\a}$ (second condition), where $\beta^{(2)}_{\a}$ is the second highest bid in auction $\a$. Note that when the reserve price for the buyer with the highest bid in auction $\a$ is set too high, revenue of this auction can be indeed less than the second highest submitted bid $\beta^{(2)}_{\a}$.
}

\begin{lemma}[Two Conditions] \label{lem:initial}
Let 
${\b{\lps}}$ and $\b{q^{\star}}$ be the optimal solution of (\ref{LP}) and $\rr$ be
    a random reserve price obtained from the   rounding procedure. If there exists a constant $c > 0$ such that for any $t \geq 0$ and
  any auctions $\a\in \A$, we have 
  \begin{align}\sum_{\{p: p \in \mathcal P_{\a}, \Rev(p) \geq t\}} s_{\a,p}^{\star} - \Pr \left[ \Rev_{\a}(\rr) \geq t \right] ~ &\leq~
  0 \quad \text{ for } t > \beta^{(2)}_{\a} \label{eq:condition1}\\
  \sum_{\{p: p \in \mathcal P_{\a}, \Rev(p) \geq t\}} s_{\a,p}^{\star} - \Pr\left[ \Rev_{\a}(\rr) \geq t \right] ~& \leq~
  c \quad \text{ for } t \le  \beta^{(2)}_{\a}\,,\label{eq:condition2}\end{align}
  then  \name{} algorithm is a $(1+c)^{-1}$-approximation.  That is, it obtains at least  $(1+c)^{-1}$ fraction of  the optimal value  of Problem  \ref{eq:opt}. Here, $\beta^{(2)}_{\a}$ is the second highest bid in auction $\a$ and $\Rev_{\a}(\rr)$ is the revenue in auction $\a$ under reserve prices $\rr$.
 \end{lemma}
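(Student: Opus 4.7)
The plan is to analyze each auction $\a \in \A$ independently via a layer-cake (tail) representation of revenue, and then to close the residual gap by comparing against the zero-reserve baseline $\b z$ that the \name{} algorithm also evaluates.

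The first step I would carry out is to rewrite both quantities appearing in conditions \eqref{eq:condition1}--\eqref{eq:condition2} as integrands of tail integrals: for every auction $\a$,
$$\sum_{p \in \mathcal P_\a} s_{\a,p}^{\star} \cdot \Rev_\a(p) \;=\; \int_0^\infty \sum_{\{p \in \mathcal P_\a :\, \Rev_\a(p) \ge t\}} s_{\a,p}^{\star} \, dt, \qquad \E\!\left[\Rev_\a(\rr)\right] \;=\; \int_0^\infty \Pr\!\left[\Rev_\a(\rr) \ge t\right] \, dt.$$
Subtracting these two identities and splitting the integration at the threshold $\beta^{(2)}_\a$, hypothesis \eqref{eq:condition1} makes the tail above $\beta^{(2)}_\a$ contribute zero while hypothesis \eqref{eq:condition2} bounds the lower portion pointwise by $c$. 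This gives a per-auction deficit of at most $c \cdot \beta^{(2)}_\a$.

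The second step is the observation that $\Rev_\a(\b z) = \beta^{(2)}_\a$ for every $\a \in \A$: with zero reserves no bidder is eliminated, so the eager second-price auction reduces to a standard second-price auction and the winner pays exactly the second-highest bid. The auxiliary zero-bidders $\bu_0, \bu_{00}$ built into the profile framework make this identity well-defined even when fewer than two real bidders are active in auction $\a$. Substituting this into the per-auction bound and summing over $\a \in \A$ yields
$$\ref{LP} \;=\; \sum_{\a \in \A} \sum_{p \in \mathcal P_\a} s_{\a,p}^{\star} \cdot \Rev_\a(p) \;\le\; \E[\Rev(\rr)] + c \cdot \Rev(\b z).$$

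The final step is to combine this bound with the definition of the algorithm's output. Because $\rlp$ is the better of $\b z$ and $\rr$, both $\E[\Rev(\rlp)] \ge \E[\Rev(\rr)]$ and $\E[\Rev(\rlp)] \ge \Rev(\b z)$ hold; averaging these two inequalities with weights $1$ and $c$ and then invoking Lemma~\ref{lemma:lp} gives $(1+c)\,\E[\Rev(\rlp)] \ge \ref{LP} \ge \esp^{\star}$, which is exactly the claimed $(1+c)^{-1}$-approximation. I do not foresee any substantive obstacle here: the argument is essentially a mechanical tail integration, and the only delicate point is the identity $\Rev_\a(\b z) = \beta^{(2)}_\a$ in degenerate auctions with fewer than two active bidders, which is precisely what the auxiliary zero-bidders $\bu_0, \bu_{00}$ were introduced to handle.
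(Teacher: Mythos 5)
Your proposal is correct and follows essentially the same route as the paper's proof: the same layer-cake integration per auction giving a deficit of at most $c\cdot\beta^{(2)}_{\a}$, the same observation that the all-zero reserve vector earns $\sum_{\a}\beta^{(2)}_{\a}$, and the same use of the max over the two candidate reserve vectors. The only (cosmetic) difference is the closing algebra — you combine the two lower bounds on $\E[\Rev(\rout)]$ with weights $1$ and $c$, whereas the paper writes $\sum_{\a}\beta^{(2)}_{\a}=x\cdot\esp^{\star}$ and minimizes $\max(x,1-cx)$ over $x$; these are equivalent.
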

 
 \begin{proof}
 By integrating over $t$ in Equations (\ref{eq:condition1}) and (\ref{eq:condition2}) and adding them up, we get 
 \begin{align}&\int_{\beta^{(2)}_{\a}}^{\infty}\Big(\sum_{\{p: p \in \mathcal P_{\a}\Rev(p) \geq t\}} s_{\a,p}^{\star} - \Pr \left[ \Rev_{\a}(\rr) \geq t \right]\Big) dt \nonumber \\
 &+ \int_{0}^{\beta^{(2)}_{\a}}\Big(\sum_{\{p: p \in \mathcal P_{\a}\Rev(p) \geq t\}} s_{\a,p}^{\star} - \Pr \left[ \Rev_{\a}(\rr) \geq t \right]\Big) dt ~\le~ c \cdot\beta^{(2)}_{\a}\,.\nonumber 
 \end{align}
 This is simplified as follows 
 \begin{align} 
  \sum_{p\in \mathcal P_{\a}}s_{\a,p}^{\star} \Rev_{\a}(p) - \E[\Rev_{\a}(\rr)] \le  c \cdot \beta^{(2)}_{\a}\,. \label{eq:int}\end{align}
Note that by Lemma \ref{lemma:lp},  the optimal value  of Problem  \ref{eq:opt}, denoted by $ \esp^{\star}$, is upper bounded by $\rev{}{\b{\lps}}$.  That is, 
\begin{align}\esp^{\star} \leq \rev{}{\b{\lps}} = \sum_{\a\in \A} \sum_{p \in \mathcal P_{\a}}s_{\a,p}^{\star} \Rev(p)\,. \label{eq:upperbound}\end{align}
Further, the revenue of  \name{} algorithm, i.e.,  $\E[\Rev(\rout)]$, is lower bounded by
\begin{align}\E[\Rev(\rout)] \ge \max \Big(\sum_{\a\in A}\beta^{(2)}_{\a}, \E[\Rev(\rr)] \Big)\,. \label{eq:lowerbound}\end{align}
To see why this holds note that  \name{} algorithm returns the best of reserve price $\rr$ and all zero prices, where the revenue under all zero prices is the sum of the second highest highest bids $\sum_{\a\in A}\beta^{(2)}_{\a}$. \revision{By using  Equations \eqref{eq:int}, (\ref{eq:upperbound}), and (\ref{eq:lowerbound}), we have
\begin{align}\nonumber
\esp^{\star} - \E[\Rev(\rout)]& \le c\sum_{\a\in \A} \beta_{\a}^{(2)} \\
&\le c  \E[\Rev(\rout)]\,.
\end{align}
Putting these together, we have
\[\E[\Rev(\rout)] \ge
 \frac{1}{1+c}\cdot \esp^{\star}\,,\]
which is the desired result.}
\Halmos \end{proof}

{In the next lemma, we show that the first condition holds. We then dedicate the next section to identifying constant $c$ in the second condition. The proof of  the lemma is based on the observations that (i) revenue of any valid profile $(\bu_1, \bu_2, r_1, r_2)$ is greater than  $\beta^{(2)}_{\a}$ if buyer $\bu_1=\bu_{\a}^{(1)}$  and his reserve $r_1> \beta^{(2)}_{\a}$, and (ii) revenue of auction $\a$ under reserve prices  $\rr$ is greater than $\beta^{(2)}_{\a}$ if the reserve price of buyer $\bu_{\a}^{(1)}$ is less than or equal to his bid and greater than the second highest bid $\beta^{(2)}_{\a}$. Here, buyer $\bu_{\a}^{(1)}$ is the buyer with the highest bid in auction $\a$.}%\newng{Talk about intuition of the previous lemma}

\begin{lemma} [First Condition Holds]\label{lemma:first_cond}
Let $\b{\lps{}}$ denote an optimal solution of \ref{LP} and 
  $\rr$ be
    a random reserve price obtained from the   rounding procedure in the  \name{} algorithm.
For any auction $\a \in A$, we have
  \begin{align}\sum_{\{p: p \in \mathcal P_{\a}, \Rev(p) \geq t\}} s_{\a,p}^{\star} - \Pr \left[ \Rev_{\a}(\rr) \geq t \right] ~ &\leq~
  0 \quad \text{ for } t > \beta^{(2)}_{\a}. \label{eq:condition11}\end{align}
\end{lemma}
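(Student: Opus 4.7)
The strategy is to argue that when the threshold $t$ exceeds the second highest bid $\beta^{(2)}_{\a}$, only very restricted profiles and rounding outcomes can generate revenue at least $t$ in auction $\a$, and then invoke the LP consistency constraints relating $s^\star$ and $q^\star$.

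First I would characterize the profiles contributing to the left-hand sum. Let $\bu^{(1)}_{\a}$ denote the highest bidder in auction $\a$, so $\beta_{\a, \bu^{(1)}_{\a}} = \beta^{(1)}_{\a}$ and any other buyer $\bu$ satisfies $\beta_{\a,\bu} \leq \beta^{(2)}_{\a} < t$. Take a valid profile $p = (\bu_1, \bu_2, r_1, r_2) \in \mathcal P_{\a}$ with $\Rev_{\a}(p) = \max(\beta_{\a,\bu_2}, r_1) \geq t$. Since Definition~\ref{def:profile} forces $\beta_{\a,\bu_2} \leq \beta_{\a,\bu_1}$ and $r_1 \leq \beta_{\a,\bu_1}$, both $\beta_{\a,\bu_2}$ and $r_1$ are at most $\beta_{\a,\bu_1}$. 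So $\beta_{\a,\bu_1} \geq t > \beta^{(2)}_{\a}$ forces $\bu_1 = \bu^{(1)}_{\a}$. Moreover, $\bu_2 \neq \bu^{(1)}_{\a}$ implies $\beta_{\a,\bu_2} \leq \beta^{(2)}_{\a} < t$, so the inequality $\Rev_{\a}(p)\geq t$ must come from $r_1 \geq t$. Consequently, every profile contributing to the sum lies in $\bigcup_{r \in \res, \, r \geq t} \mathcal{Q}_{\bu^{(1)}_{\a}, r, \a}$ (restricted to the $\bu_1$-side of the union), and these sets are disjoint across distinct values of $r$.

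Next I would characterize $\Pr[\Rev_{\a}(\rr) \geq t]$. With the rounded reserves, any winner $\bu \neq \bu^{(1)}_{\a}$ contributes revenue at most $\beta_{\a,\bu} \leq \beta^{(2)}_{\a} < t$, so the event $\{\Rev_{\a}(\rr) \geq t\}$ occurs only when $\bu^{(1)}_{\a}$ clears his reserve and that reserve is at least $t$; that is, when $\rr_{\bu^{(1)}_{\a}} \in [t, \beta^{(1)}_{\a}] \cap \res$. By the independent sampling in the rounding procedure, this probability equals $\sum_{r \in \res,\, t \leq r \leq \beta^{(1)}_{\a}} q^{\star}_{\bu^{(1)}_{\a}, r}$.

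Finally, combining the two pieces with the LP constraint $\sum_{p \in \mathcal{Q}_{\bu, r, \a}} s^{\star}_{\a,p} \leq q^{\star}_{\bu, r}$, I would chain
\begin{align*}
\sum_{\{p \in \mathcal P_{\a}:\, \Rev(p) \geq t\}} s^{\star}_{\a,p}
\;\leq\; \sum_{\substack{r \in \res \\ t \leq r \leq \beta^{(1)}_{\a}}} \sum_{p \in \mathcal{Q}_{\bu^{(1)}_{\a}, r, \a}} s^{\star}_{\a,p}
\;\leq\; \sum_{\substack{r \in \res \\ t \leq r \leq \beta^{(1)}_{\a}}} q^{\star}_{\bu^{(1)}_{\a}, r}
\;=\; \Pr[\Rev_{\a}(\rr) \geq t],
\end{align*}
which yields the claim. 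I do not expect any serious obstacle; the only subtlety is making the structural observation about the winner being the top bidder watertight when $t > \beta^{(2)}_{\a}$, which hinges on using both profile-validity inequalities ($\beta_{\a,\bu_2}\leq \beta_{\a,\bu_1}$ and $r_1 \leq \beta_{\a,\bu_1}$) simultaneously.
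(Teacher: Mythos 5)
Your proposal is correct and follows essentially the same route as the paper: identify that any profile with revenue at least $t > \beta^{(2)}_{\a}$ must assign the top bidder $\bu^{(1)}_{\a}$ a reserve $r \geq t$, bound the corresponding $s^{\star}$-mass by $q^{\star}_{\bu^{(1)}_{\a},r}$ via the second LP constraint, and recognize the resulting sum as $\Pr[\Rev_{\a}(\rr) \geq t]$. Your version is in fact slightly more careful than the paper's in restricting the intermediate sum to $r \leq \beta^{(1)}_{\a}$, which is needed for the final equality with the probability to be exact.
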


\begin{proof}
The first term in the l.h.s. of (\ref{eq:condition11}) can be written as 
\begin{align}
\sum_{\{p: p \in \mathcal P_{\a}, \Rev(p) \geq t\}} s_{\a,p}^{\star} = \sum_{ \{p : p \in \mathcal P_{\a}, p=(\bu_{\a}^{(1)}, \bu_2, r, r_2),~ r\ge t  \}} s_{\a,p}^{\star} ~\le ~ \sum_{r \ge t } \q_{\bu_{\a}^{(1)}, r} = \Pr\left[ \Rev_{\a}(\rr) \geq t \right]\,,\label{eq:verify_first_condition}
\end{align}
where the first equation holds because revenue of a profile $p\in \mathcal P_{\a}$ is $t > \beta^{(2)}_{\a} $ if and only if the bidder with the highest bid in auction $\a$, i.e., $\bu_{\a}^{(1)}$, is assigned a reserve price $t> \beta^{(2)}_{\a}$ and the bid of this bidder is greater than $t$.  The second equation holds because of the second set of constraints of (\ref{LP}). The last equation follows from the construction of reserve prices $\rr$. Note that Equation (\ref{eq:verify_first_condition}) verifies condition (\ref{eq:condition11}).
\Halmos\end{proof}

\subsection{Bounding the Constant in the Second Condition}

{We start by noting that the second condition in Lemma \ref{lem:initial} holds
trivially for $c=1$, which recovers the same approximation factor of $1/2$ of
\cite{RW16}. For the rest of the paper, we will improve past $1/2$ by
constructing a non-linear mathematical program to optimize $c$ and then applying the
first order conditions in non-linear programming to bound the optimal solution.}
In Lemma \ref{lemma:apxf}, we show that
   \[c = \max_{\qq\in [0,1]}\optf{\qq}{m}\,,\]
   where for any real number $\qq\in [0, 1]$, $\optf{\qq}{m}$ is defined as follows
    \begin{align} 
\optf{\qq}{m}~=~ \max_{\b{x}\ge 0} &~~~e^{\qq-1}\left[ \prod_{i\in [n]} (1-x_{i}) + \sum_{i\in [n]} x_{i} 
\prod_{j\in [n] , j\neq i} (1-x_{j}) \right]\nonumber  &\\\nonumber 
 \text{s.t.} \quad &\frac{1}{2} \sum_{i\in [n]} x_{i} =\qq \\
&  x_{i} \leq \qq,  \quad \forall i \in [n]\,.
\label{eq:opt_q_m}
\end{align}
{Here, $n$ is the number of buyers.} Characterizing  $\optf{\qq}{m}$ is technically involved and because of that its details  is postponed to Section~\ref{sec:approx_factor}. There, we
show that for  any number of buyers  $n\ge 2$ and any real number $\qq\in [0, 1]$,
$$\optf{\qq}{m} \le 2 \left(\sqrt{2}-1\right) e^{\sqrt{2}-2} \approx 0.4612.$$
Then, invoking Lemmas \ref{lem:initial} and \ref{lemma:first_cond}, this leads
to the approximation factor of $\frac{1}{1+0.4612} \approx 0.6844$, which is the desired result. 

In the next lemma, we formally state the relationship between $\optf{\qq}{m}$ 
and the approximation factor of our algorithm.

\begin{lemma}[Second Condition]\label{lemma:apxf}
Let $\b{\lps{}}$ denote an optimal solution of \ref{LP} and 
$\rr$  be
    a random reserve price obtained from the   rounding procedure in the  \name{} algorithm. Let 
\[c = \max_{\qq\in [0,1]}\optf{\qq}{m}\,. \]
 Then, for any auction $\a \in A$, the following equation holds.
  \begin{align} \nonumber
  \sum_{\{p: p \in \mathcal P_{\a}, \Rev(p) \geq t\}} s_{\a,p}^{\star} - \Pr\left[ \Rev_{\a}(\rr) \geq t \right] ~& \leq~
  c \quad \text{ for } t \le  \beta^{(2)}_{\a}\,. \end{align}
\end{lemma}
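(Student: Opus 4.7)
Fix an auction $\a \in \A$ and a threshold $t \leq \beta^{(2)}_{\a}$, and let $H := \{\bu \in \B : \beta_{\a,\bu} \geq t\}$ be the set of ``high bidders'' (with $|H| \geq 2$ since $t \leq \beta^{(2)}_{\a}$). The plan is to show that the LHS of the stated inequality is bounded by $\optf{\theta}{m}$ for some $\theta \in [0,1]$, after which the bound $c = \max_{\theta \in [0,1]} \optf{\theta}{m}$ follows immediately.

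The first step is to parameterize by $\theta := \sum_{p \in \mathcal{P}_{\a}:\, \Rev(p) \geq t} s^{\star}_{\a,p}$, which lies in $[0,1]$ by the LP feasibility constraint $\sum_p s_{\a,p} \leq 1$, and to extract per-buyer marginals from the LP solution. A valid profile $p = (\bu_1,\bu_2,r_1,r_2)$ has $\Rev(p) \geq t$ if and only if either (A) both named buyers lie in $H$, or (B) $\bu_1 \in H$ with $r_1 \geq t$ and $\bu_2 \notin H$. For each $\bu \in H$ I would define
\[
x_{\bu} \;:=\; \sum_{r \in \res} \sum_{p \in \mathcal{Q}_{\bu,r,\a}:\, \Rev(p) \geq t} s^{\star}_{\a,p},
\]
the total LP-weight of high-revenue profiles in which $\bu$ is named. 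Using the LP coupling constraint $\sum_{p \in \mathcal{Q}_{\bu,r,\a}} s^{\star}_{\a,p} \leq q^{\star}_{\bu,r}$ and a double-counting argument (handling Case B by appending the auxiliary zero-bidder $\bu_0$ to $H$ so that every high-revenue profile still contributes two to the running count), this produces the two key identities $\sum_{\bu \in H} x_{\bu} = 2\theta$ and $x_{\bu} \leq \theta$, matching the feasibility constraints of $\optf{\theta}{m}$. The same argument also yields $x_{\bu} \leq y_{\bu}$, where $y_{\bu} := \sum_{r \leq \beta_{\a,\bu}} q^{\star}_{\bu,r}$ is the probability that $\bu$ clears its randomly sampled reserve.

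The second step lower bounds $\Pr[\Rev_{\a}(\rr) \geq t]$. Because $t \leq \beta^{(2)}_{\a}$, the event that at least two buyers in $H$ clear their sampled reserves is already sufficient for $\Rev_{\a}(\rr) \geq t$, so by the independence of the rounding,
\[
\Pr[\Rev_{\a}(\rr) \geq t] \;\geq\; 1 - \Bigl[\prod_{\bu \in H}(1 - y_{\bu}) + \sum_{\bu \in H} y_{\bu} \prod_{\bu' \in H,\, \bu' \neq \bu}(1 - y_{\bu'})\Bigr].
\]
Writing $f(\mathbf{z}) := \prod_{\bu}(1-z_{\bu}) + \sum_{\bu} z_{\bu} \prod_{\bu' \neq \bu}(1 - z_{\bu'})$ (the probability of at most one success in independent Bernoulli trials with success probabilities $z_{\bu}$), a quick derivative computation shows that $f$ is coordinate-wise decreasing, so $y_{\bu} \geq x_{\bu}$ implies $f(\mathbf{y}) \leq f(\mathbf{x})$ and hence $\Pr[\Rev_{\a}(\rr) \geq t] \geq 1 - f(\mathbf{x})$. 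Combining with Step 1, one concludes that the LHS is at most $\theta - 1 + f(\mathbf{x})$.

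The third step is the algebraic punchline that produces the $e^{\theta-1}$ factor. Since $f(\mathbf{x}) \leq 1$ and the elementary inequality $e^{\theta-1} \geq \theta$ holds for all $\theta \in [0,1]$ (easily verified by noting that $\theta \mapsto e^{\theta-1} - \theta$ is decreasing on $[0,1]$ and vanishes at $\theta = 1$), one obtains via the chain $(e^{\theta-1} - 1) f(\mathbf{x}) \geq e^{\theta-1} - 1 \geq \theta - 1$ the bound
\[
\theta - 1 + f(\mathbf{x}) \;\leq\; e^{\theta-1}\, f(\mathbf{x}),
\]
which is exactly the objective of $\optf{\theta}{m}$ evaluated at the feasible point $\mathbf{x}$. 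Hence LHS $\leq \optf{\theta}{m} \leq c$, as required. The main obstacle is Step 1: making the double-counting identity $\sum_{\bu \in H} x_{\bu} = 2\theta$ hold as an equality (rather than an inequality in $[\theta, 2\theta]$) requires a careful accounting of Case B profiles via the auxiliary zero-bidders $\bu_0, \bu_{00}$; the subsequent probabilistic bound and the final exponential inequality are, by comparison, routine.
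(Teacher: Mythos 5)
There is a genuine gap, and it is fatal to the argument as proposed. Your lower bound on $\Pr[\Rev_{\a}(\rr)\ge t]$ uses only the event ``at least two buyers in $H$ clear their sampled reserves,'' but this ignores the other way an auction earns revenue $\ge t$: a \emph{single} buyer clearing a reserve that is itself at least $t$ (the paper's event $\mathcal E_1$). This omission is exactly what your Case~(B) profiles need. Concretely, take $|H|=2$ and put all LP weight $\theta=1$ on profiles $(\bu_1,\bu_0,r,0)$ with $r\ge t$. Then $x_{\bu_1}=1$, $x_{\bu_2}=0$, so $f(\mathbf x)=1$ and your chain yields only $\text{LHS}\le \theta-1+f(\mathbf x)=1$ and $e^{\theta-1}f(\mathbf x)=1$ --- the trivial bound, far above $c\approx 0.4612$ (the true LHS here is $0$, realized through $\mathcal E_1$, which your bound never sees). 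Relatedly, your feasibility claim fails: each Case~(B) profile names only one buyer of $H$, so $\sum_{\bu\in H}x_\bu$ lies anywhere in $[\theta,2\theta]$ rather than equaling $2\theta$ (in the example above it equals $\theta$, and $x_{\bu_1}=\theta$ with $\sum x_\bu=\theta\ne 2\theta$). Appending $\bu_0$ to $H$ does not repair this --- the low bidder in a Case~(B) profile need not be $\bu_0$, and, worse, admitting any buyer with bid $<t$ into $H$ invalidates the probabilistic step, since two ``clears'' that include such a buyer do not force revenue $\ge t$.

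The structural point you are missing is that the paper splits each buyer's mass in two: $x_{1,\bu}$ (profiles where $\bu$ carries a reserve $\ge t$) paired with $y_{1,\bu}=\Pr[\rrb\in[t,\beta_{\a,\bu}]]$, and $x_{2,\bu}$ (profiles with both reserves $<t$ but second bid $\ge t$) paired with $y_{2,\bu}=\Pr[\rrb\in[0,t)]$. Only the $x_2$-type mass enters the ``two clears'' computation; the $x_1$-type mass is credited through $\mathcal E_1$, and the factor $e^{\qq-1}$ in $\optf{\qq}{m}$ arises as $\prod_{\bu}(1-x_{1,\bu})\le e^{-\sum_\bu x_{1,\bu}}=e^{\qq-1}$ after normalizing $\sum_\bu x_{1,\bu}+\tfrac12\sum_\bu x_{2,\bu}=1$ --- not from the generic inequality $e^{\qq-1}\ge\qq$. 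Your aggregated $x_\bu$ conflates the two roles, which is precisely why the bound collapses when high-reserve profiles dominate. A correct proof along these lines also needs the paper's Lemma~\ref{lemma:moving_buyers} to disentangle the $x_1$ product from the $x_2$ terms after the union of the two events is computed.
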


{The formal proof of Lemma \ref{lemma:apxf} due to being lengthy is deferred to Subsection~\ref{subsection:proof-apxf}, however we provide some intuition here. For each buyer $\bu$, we consider two disjoint subsets of valid profiles $(\bu_1,\bu_2, r_1, r_2)$ such that (i) revenue of any profile in these subsets is greater than or equal to $t$, where $t\le \beta_{\a}^{(2)}$, and (ii) either $\bu_1$ or $\bu_2$ is equal to buyer $\bu$. 
The factor $1/2$ in the constraint of Problem (\ref{eq:opt_q_m}) is the artifact of the definition of the subsets and how the  summation $ \sum_{\{p: p \in \mathcal P_{\a}, \Rev(p) \geq t\}} s_{\a,p}^{\star}$  can be written as a function of the optimal weight of the profiles in these subsets; see Equation (\ref{eq:question}) in the proof. We then express  $\Pr\left[ \Rev_{\a}(\rr) \geq t \right]$ as the probability of the  union of two events, where the first event happens  if there is at least one cleared buyer with reserve price greater than $t$, and the second event happens if there are at least two buyers with cleared bids of at least $t$. We then write this probability as a function of the profile weights in the subsets  by taking advantage of the fact that in our rounding procedure,  reserve prices are independent across buyers. In particular, we show that this probability is at least one minus the left hand side of the equation in Lemma \ref{lemma:moving_buyers}, stated below.   We  invoke Lemma \ref{lemma:moving_buyers} to complete the proof. Observe that the objective function of Problem (\ref{eq:opt_q_m}) bears significant resemblance to that of  the right hand of the equation in Lemma~\ref{lemma:moving_buyers}.   }

\begin{lemma}\label{lemma:moving_buyers} {Consider a set $\hat \B\subseteq \B$ with $|\hat \B|\ge 2$.\footnote{Note that because of the auxiliary buyers, $|\B|\ge 2$. } Given fixed $x_{1,\bu}, x_{2,\bu}$ with $\bu \in \hat \B$  and $x_{1,\bu} +
  x_{2,\bu} \leq 1$, the following inequality holds:
  $$  \prod_{\bu\in \hat \B} (1- x_{1,\bu}-  x_{2,\bu}) + \sum_{\bu\in \hat \B}  x_{2,\bu} \prod_{\bu' \neq \bu} (1-
  x_{1,\bu'}- x_{2,\bu'}) \leq  \prod_{\bu\in \hat \B} (1- x_{1,\bu}) \left[\prod_{\bu\in \hat \B} (1-  x_{2,\bu}) + \sum_{\bu\in \hat \B}  x_{2,\bu} \prod_{\bu' \neq \bu} (1- x_{2,\bu'})\right].  $$}
\end{lemma}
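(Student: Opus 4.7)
The plan is to give a probabilistic interpretation of both sides and reduce the inequality to a stochastic-monotonicity statement. For each buyer $\bu\in\hat\B$ I would set up two models, independent across buyers, with matched marginals: an \emph{exclusive} model in which $\bu$ experiences exactly one of three mutually exclusive outcomes $A_\bu, B_\bu, N_\bu$ with probabilities $x_{1,\bu}, x_{2,\bu}, 1-x_{1,\bu}-x_{2,\bu}$ (well-defined since $x_{1,\bu}+x_{2,\bu}\le 1$); and an \emph{independent} model in which $A_\bu$ and $B_\bu$ are \emph{independent} Bernoullis with respective parameters $x_{1,\bu}$ and $x_{2,\bu}$. By direct expansion, the LHS of the lemma equals $\Pr_{\mathrm{exc}}[\text{no }A_\bu\text{ occurs, and at most one }B_\bu\text{ occurs}]$: the first product is ``no event of either type anywhere,'' and each summand is ``exactly one $B$ at buyer $\bu$, and no $A$ anywhere.'' The RHS factors as $\prod_{\bu}(1-x_{1,\bu})\cdot\bigl[\prod_\bu(1-x_{2,\bu})+\sum_\bu x_{2,\bu}\prod_{\bu'\neq \bu}(1-x_{2,\bu'})\bigr]$, which, using within-buyer independence of $A$ and $B$ in the independent model, is exactly $\Pr_{\mathrm{ind}}[\text{no }A_\bu\text{, and at most one }B_\bu]$.

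Next I would observe that the two models share the same marginal $\Pr[\{\neg A_\bu\ \forall \bu\}]=\prod_\bu(1-x_{1,\bu})$, and that conditional on this event the $B_\bu$'s remain mutually independent in both models (all randomness is still buyer-wise independent). The conditional success parameters, however, differ: in the exclusive model the conditional probability of $B_\bu$ is $x_{2,\bu}/(1-x_{1,\bu})$, while in the independent model it is just $x_{2,\bu}$. Since $1-x_{1,\bu}\le 1$, the exclusive-model parameters pointwise dominate the independent-model parameters.

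The final step is to invoke monotonicity of the function $f(p_1,\ldots,p_n):=\Pr[\le 1\text{ success among independent Bernoullis}(p_\bu)]$ in each $p_\bu$. This is a standard coupling fact: raising any $p_\bu$ can only increase the total number of successes, and ``at most one success'' is a decreasing event in that count. Applied to the two conditional $B$-distributions, this gives $\Pr_{\mathrm{exc}}[\le 1 \text{ of } B_\bu\mid \text{no } A]\le \Pr_{\mathrm{ind}}[\le 1 \text{ of } B_\bu\mid \text{no } A]$. Multiplying both sides by the common factor $\Pr[\text{no } A]$ yields $\mathrm{LHS}\le \mathrm{RHS}$, as required.

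The only moderately delicate step is the algebraic identification in the first paragraph, where one must verify that the specific combinatorial sums in the lemma's statement correspond exactly to the ``no $A$, at most one $B$'' event in each model; once this bookkeeping is done, the probabilistic argument is short and transparent. A fallback would be induction on $|\hat\B|$ via the recursion $L(n)=(1-x_{1,n}-x_{2,n})L(n-1)+x_{2,n}\prod_{\bu\neq n}(1-x_{1,\bu}-x_{2,\bu})$ and its RHS analogue, but the inductive step mixes terms of opposing sign and appears harder to close without essentially rediscovering the probabilistic coupling above.
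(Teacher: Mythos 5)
Your proposal is correct, and it takes a genuinely different route from the paper. The paper proves the lemma by a purely algebraic exchange argument: it defines a hybrid quantity $\Phi(\B_1,\B_2)$ in which buyers in $\B_1$ contribute the factored term $(1-x_{1,\bu})(1-x_{2,\bu})$ and buyers in $\B_2$ contribute $(1-x_{1,\bu}-x_{2,\bu})$, and shows that moving one buyer at a time from $\B_2$ to $\B_1$ can only increase $\Phi$, using the elementary inequality $1-x_{1,\hat\bu}-x_{2,\hat\bu}\le(1-x_{1,\hat\bu})(1-x_{2,\hat\bu})$ term by term; the lemma is then $\Phi(\emptyset,\hat\B)\le\Phi(\hat\B,\emptyset)$. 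Your argument instead identifies both sides as $\Pr[\text{no }A\text{ and at most one }B]$ under two buyer-wise independent models with matched marginals but different within-buyer dependence (mutually exclusive versus independent), conditions on the common event ``no $A$,'' and reduces the claim to the monotonicity of $p\mapsto\Pr[\le 1\text{ success}]$ together with $x_{2,\bu}/(1-x_{1,\bu})\ge x_{2,\bu}$; all of these steps check out, and the combinatorial bookkeeping you flag as the delicate point does verify (the LHS sum is exactly ``one $B$, rest $N$'' and the RHS factorization is exactly within-buyer independence). The paper's proof is self-contained algebra and needs no probabilistic scaffolding; yours is shorter once the interpretation is set up, explains \emph{why} the inequality holds (the exclusive model exhibits negative dependence within each buyer, inflating the conditional $B$-probabilities), and dovetails nicely with how the lemma is invoked in the proof of Lemma~\ref{lemma:apxf}, where the conditional quantities $\tilde y_{2,\bu}=y_{2,\bu}/(1-y_{1,\bu})$ already appear --- in effect you run that computation in reverse. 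The only point worth a one-line remark is the degenerate case $x_{1,\bu}=1$ (forcing $x_{2,\bu}=0$), where the conditional probability $x_{2,\bu}/(1-x_{1,\bu})$ is formally $0/0$; there both sides of the inequality vanish, so the claim holds trivially.
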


\begin{proof}Given a partition of {$\hat \B$} in two sets $\B_1, \B_2$, define the
  following function:
  $$\begin{aligned}\Phi(\B_1, \B_2) = & \prod_{\bu \in \B_1} (1- x_{1,\bu}) (1- x_{2,\bu}) \prod_{\bu \in
    \B_2}  (1- x_{1,\bu} - x_{2,\bu})  + \\ &  \sum_{\bu \in \B_1 \cup \B_2} x_{2,\bu} \left[ \prod_{\bu' \in
  \B_1, \bu' \neq \bu} (1- x_{1,\bu}) (1- x_{2,\bu})  \prod_{\bu' \in \B_2, \bu' \neq \bu} (1- x_{1,\bu} -
  x_{2,\bu}) \right]. \end{aligned}  $$
  The main claim in the lemma is that $\Phi(\B, \emptyset) \geq \Phi(\emptyset,
  \B)$. We will show that for any $\B_1, \B_2$ and $\hat \bu \in \B_2$, we have
  $$\Phi(\B_1, \B_2)  \leq  \Phi(\B_1 \cup \{\hat \bu\}, \B_2 \setminus \{\hat \bu\}) $$
  and the claim will follow by moving the elements from $\B_2$ to $\B_1$ one by
  one. To simplify notation, define
  $$w =  \prod_{\bu \in \B_1} (1- x_{1,\bu}) (1- x_{2,\bu}) \prod_{\bu \in \B_2
  \setminus \{ \hat \bu \}}  (1- x_{1,\bu} - x_{2,\bu})\,. $$
  Now we can write:
  $$\Phi(\B_1, \B_2) = w \cdot (1-x_{1,\hat \bu}-x_{2,\hat \bu}) + w \cdot x_{2,\hat \bu}
  + \sum_{\mathclap{\bu \in \B_2; \bu \neq \hat \bu}} \;\;\, w \cdot \frac{1-x_{1,\hat \bu}-x_{2,\hat
  \bu}}{1-x_{1,\bu}-x_{2,\bu}} \cdot x_{2,\bu} + \sum_{\bu \in \B_1} w \cdot \frac{1-x_{1,\hat \bu}-x_{2,\hat
  \bu}}{(1-x_{1,\bu})(1-x_{2,\bu})} \cdot x_{2,\bu}$$
  and
  \begin{align*}\Phi(\B_1 \cup \{\hat \bu\}, \B_2 \setminus \{\hat \bu\}) &= w \cdot
  (1-x_{1,\hat \bu})(1-x_{2,\hat \bu}) + w \cdot x_{2,\hat \bu}
  \\
  &+ \sum_{\bu \in \B_2; \bu \neq \hat \bu} w \cdot \frac{(1-x_{1,\hat \bu})(1-x_{2,\hat
  \bu})}{1-x_{1,\bu}-x_{2,\bu}} \cdot x_{2,\bu} + \sum_{\bu \in \B_1} w \cdot
  \frac{(1-x_{1,\hat \bu})(1-x_{2,\hat
  \bu})}{(1-x_{1,\bu})(1-x_{2,\bu})} \cdot x_{2,\bu}\end{align*}
Our goal here is to show $\Phi(\B_1, \B_2)  \leq  \Phi(\B_1 \cup \{\hat \bu\}, \B_2
  \setminus \{\hat \bu\}) $. 
 We start with  comparing the first two terms of $\Phi(\B_1, \B_2) $ and   $\Phi(\B_1 \cup \{\hat \bu\}, \B_2
  \setminus \{\hat \bu\})$:
   $$ w \cdot (1-x_{1,\hat \bu}-x_{2,\hat \bu}) + w \cdot x_{2,\hat \bu}
 = w \cdot (1-x_{1,\hat \bu}) \leq w \cdot (1-x_{1,\hat \bu} + x_{1,\hat \bu}
  x_{2,\hat \bu}) =  w \cdot
  (1-x_{1,\hat \bu})(1-x_{2,\hat \bu}) + w \cdot x_{2,\hat \bu}\,.$$
  We can compare the remaining terms one by one using the fact that:
  $$ 1-x_{1,\hat b}-x_{2,\hat \bu} \leq (1-x_{1,\hat \bu})(1-x_{2,\hat \bu})\,. $$
  This concludes that $\Phi(\B_1, \B_2)  \leq  \Phi(\B_1 \cup \{\hat \bu\}, \B_2
  \setminus \{\hat \bu\}) $ as desired.
   \end{proof}

\subsection{Proof of Lemma~\ref{lemma:apxf}}\label{subsection:proof-apxf}
We start with a few definitions. Consider a certain auction $\a \in \A$ and all of its valid profiles $p \in \mathcal P_{\a}$. Fix some threshold $t\le \beta^{(2)}_{\a}$
and an optimal  solution  of  (\ref{LP}), denoted by $\b{\lps{}}$.  {Let set $\B_{\a,t}$ be the set of buyers whose  bid in auction $\a$ is at least $t$:
$$\B_{\a,t} := \{\bu \in \B; \beta_{\a, \bu} \geq t\}\,.$$ 
Note that this set is not empty because $t\le \beta^{(2)}_{\a}$. In fact, $|\B_{\a,t}|\ge 2$, as buyers with the highest and second-highest bids belong to this set. (Recall that because of the auxiliary buyers, $|\B|\ge 2$.)
A crucial observation is that the reserve assigned to any buyer $\bu \notin \B_{\a,t}$ does not affect the event $\Rev_{\a}(\rr) \geq t$ since such buyers can be neither the winner nor the price setter in an auction with revenue of at least $t$. }
Consider a buyer ${\bu \in \B_{\a,t}}$. Then, define  
  \[\mathcal X'_{1, \bu} =\{ p = (\bu, \bu_2, r_1, r_2): p\in \mathcal P_{\a},~  r_1\ge t\}\] 
  \[\mathcal X''_{1, \bu} =\{ p = (\bu_1, \bu, r_1, r_2): p\in \mathcal P_{\a},~
  r_1< t \text{ and } r_2 \geq t\}\] 
  \[\mathcal X_{2, \bu} =\{ p = (\bu_1, \bu_2, r_1, r_2): p\in \mathcal P_{\a},
  \bu \in \{\bu_1, \bu_2\}, r_1, r_2 < t \text{ and }  \beta_{\a,\bu_2} \ge t\} \] 
  and then set
  \[x_{1, \bu} = \sum_{p\in \mathcal X'_{1, \bu} \cup X''_{1,\bu} } s_{\a,p}^{\star}  \quad \text{and} \quad x_{2, \bu} = \sum_{p\in \mathcal X_{2, \bu} } s_{\a,p}^{\star} \,.\]
  {We note that $\mathcal X'_{1, \bu}$ is the set of all valid profiles $p = (\bu, \bu_2, r_1, r_2)$ in which  reserve of buyer $\bu$ is at least $t$.  $\mathcal X''_{1, \bu}$ is the set of all valid profiles $p = (\bu_1, \bu, r_1, r_2)$ in which reserve of  buyer $\bu_1$ is less than $t$ and reserve of buyer $\bu$ is greater than or equal to $t$. Observe that for all the profiles $p$ in $\mathcal X'_{1, \bu}\cup \mathcal X''_{1, \bu}$, reserve of buyer $\bu$ is at least $t$. This implies that for all of these profiles, $\Rev(p)\ge t$. We also note that $\mathcal X_{2, \bu}$ is the set of all valid profiles $p = (\bu_1, \bu_2, r_1, r_2)$ such that buyer $\bu\in \{\bu_1, \bu_2\}$ and bid of buyer $\bu_2$ is at least $t$. Again, it is easy to see that for any valid profile $p\in \mathcal X_{2, \bu}$, we have $\Rev(p)\ge t$. Finally, we point that while any profile $p$ in $\mathcal X_{2, \bu}$ and $\mathcal X'_{1, \bu}\cup \mathcal X''_{1, \bu}$ has $\Rev(p) \ge t$, by construction, $\mathcal X_{2, \bu}$ and $\mathcal X'_{1, \bu}\cup \mathcal X''_{1, \bu}$ are disjoint. Therefore, we have}
  \begin{align} & \sum_{\{p: ~p \in \mathcal P_{\a}, ~\Rev(p) \geq t\}} s_{\a,p}^{\star} ~=~ \sum_{{\bu \in \B_{\a, t}}} x_{1,\bu} + \frac{1}{2} 
  \sum_{{\bu \in \B_{\a, t}}} x_{2,\bu} \,,\label{eq:question} \end{align}
  where the coefficient $\frac{1}{2}$ accounts for double-counting.  
  Define $y_{1, \bu}$ as the probability that  
  the sampled reserve of buyer $\bu$, i.e., $\rrb$,  is in $[t,\beta_{\a,\bu}]$ and $y_{2,\bu}$ as the
  probability that the sampled reserve $\rrb$ is in $[0,t)$.
  By the sampling procedure we know that:
  \begin{align}y_{1, \bu} \geq x_{1,\bu} \quad \text{and}\quad y_{2,\bu} \geq x_{2,\bu}\,. \label{eq:ineq}\end{align}
  Observe that $\Rev_{\a}(\rr) \geq t$ iff at least one of the two following events happen.
\begin{enumerate}
\item []\textbf{Event $\mathcal E_1$:} There exists a buyer {$\bu \in \B_{\a, t}$} with a reserve of at least $t$ whose bid is cleared. 
\item [] \textbf{Event $\mathcal E_2$:} There are at least two buyers {$\bu_1, \bu_2 \in \B_{\a, t}$} with cleared bids of at least $t$.
 \end{enumerate}
  Precisely, 
  \begin{equation} \label{eq:pos} \Pr\left[ \Rev_{\a}(\rr) \geq t \right] ~=~
 \Pr[\mathcal E_1 \text{ or }  \mathcal E_2] ~=~ \Pr[\mathcal E_1] + \Pr[ \mathcal
 E_2 \text{ and } \bar{\mathcal E}_1] ~=~ \Pr[\mathcal E_1] +  \Pr[ \bar{\mathcal E}_1] \Pr[ \mathcal
 E_2 \vert \bar{\mathcal E}_1]\,,\end{equation}
where 
$$ \Pr[\mathcal E_1] ~=~ 1 - \prod_{{\bu \in \B_{\a, t}}} (1- y_{1,\bu}),  $$
and 
  $$ \Pr[ \mathcal E_2 \vert \bar{\mathcal E}_1] ~=~ 1 - \prod_{{\bu \in \B_{\a,t}}} (1- \tilde y_{2, \bu})
  - \sum_{{\bu \in \B_{\a,t}}} \tilde y_{2, \bu} \prod_{\bu' \neq \bu} (1- \tilde y_{2, \bu'}) \quad \text{ with }
  \quad \tilde y_{2, \bu} ~=~ \frac{y_{2, \bu}}{1-y_{1, \bu}}\,.$$
  This gives us
   $$\Pr[ \mathcal E_2 \text{ and } \bar{\mathcal E}_1] ~=~ \Pr[ \bar{\mathcal E}_1] \Pr[ \mathcal
 E_2 \vert \bar{\mathcal E}_1] ~=~  \Pr[\bar{\mathcal E}_1] -
  \prod_{{\bu \in \B_{\a,t}}} (1-y_{1,\bu} - y_{2,\bu}) - \sum_{{\bu \in \B_{\a,t}}}  y_{2, \bu} \prod_{\bu' \neq \bu} (1- y_{1, \bu'} -y_{2, \bu'})\,. $$
  Thus, by Equation (\ref{eq:pos}), we get
    $$\Pr[ \mathcal E_2 \text{ or } {\mathcal E}_1] ~=~ 1 -
  \prod_{{\bu \in \B_{\a,t}}} (1- y_{1,\bu}-  y_{2,\bu}) - \sum_{{\bu \in \B_{\a,t}}}  y_{2,\bu} \prod_{\bu' \neq \bu} (1- y_{1,\bu'}- y_{2,\bu'})\,. $$
  Now observe that the expression above, i.e., $\Pr[ \mathcal E_2 \text{ or } {\mathcal E}_1]$,  is increasing  in both $y_{1,\bu}$ and
  $y_{2,\bu}$, ${\bu \in \B_{\a,t}}$. {To see why $\Pr[ \mathcal E_2 \text{ or } {\mathcal E}_1] $ is increasing in $y_{2,\bu}$, note that 
    \begin{align}\nonumber \frac{ \partial (\Pr[ \mathcal E_2 \text{ or } {\mathcal E}_1])} {\partial   y_{2,\bu}}
   &~=~ 
   \sum_{\bu'\in {\B_{\a,t}}, \bu' \neq  \bu}  y_{2,\bu'} \prod_{\bu'' \neq \bu,  \bu'} (1- y_{1,\bu''}- y_{2,\bu''}) \ge 0\,.
     \end{align}}
  This and Equation \eqref{eq:ineq} imply  that:
    $$\Pr[ \mathcal E_2 \text{ or } {\mathcal E}_1] ~\geq ~ 1 -
  \prod_{{\bu \in \B_{\a,t}}} (1- x_{1,\bu}-  x_{2,\bu}) - \sum_{{\bu \in \B_{\a,t}}}  x_{2,\bu} \prod_{\bu' \neq \bu} (1-
  x_{1,\bu'}- x_{2,\bu'})\,. $$
  We now invoke Lemma \ref{lemma:moving_buyers}, stated earlier, to get 
\begin{align}\Pr[ \mathcal E_2 \text{ or } {\mathcal E}_1] ~\geq ~ 1 -
  \prod_{{\bu \in \B_{\a,t}}} (1- x_{1,\bu}) \left[\prod_{{\bu \in \B_{\a,t}}} (1-  x_{2,\bu}) + \sum_{{\bu \in \B_{\a,t}}}  x_{2,\bu} \prod_{\bu' \neq \bu} (1- x_{2,\bu'})\right]. \label{eq:union}\end{align}
Using Equations (\ref{eq:question}), (\ref{eq:union}), and  (\ref{eq:pos}), we have 
\begin{align}\nonumber & \;\;\;\;\;\;\; \;\sum_{\mathclap{\{p: p \in \mathcal P_{\a}, \Rev(p) \geq t\}}} \;\;\;\;\; s_{\a,p}^{\star} -
    \Pr\left[ \Rev_{\a}(\rr) \geq t \right] ~\leq~ \\ & \quad \sum_{{\bu \in \B_{\a,t}}} x_{1,\bu} + \frac{1}{2} 
  \sum_{{\bu \in \B_{\a,t}}} x_{2,\bu} - \left(1 -
  \prod_{{\bu \in \B_{\a,t}}} (1- x_{1,\bu}) \left[\prod_{{\bu \in \B_{\a,t}}} (1-  x_{2,\bu}) + \sum_{{\bu \in \B_{\a,t}}}  x_{2,\bu} \prod_{\bu' \neq \bu} (1- x_{2,\bu'})\right] \right).\label{eq:bound} \end{align}
  We claim that for any ${\bu \in \B_{\a,t}}$, the above expression is non-decreasing in $x_{1, \bu}$. \revision{To get this, we need to show that the derivative of the above expression w.r.t.  $x_{1, \hat \bu}$ is non-negative. In the other words, we need to show the following equation holds:
  \begin{align}&1-\prod_{{\bu \in \B_{\a,t}}, \bu\ne \hat \bu} (1- x_{1,\bu}) \left[\prod_{{\bu \in \B_{\a,t}}} (1-  x_{2,\bu}) + \sum_{{\bu \in \B_{\a,t}}}  x_{2,\bu} \prod_{\bu' \neq \bu} (1- x_{2,\bu'})\right]\geq 0. \label{eq:ijefjier}
  \end{align} 
  Since $0\leq(1-x_{1,\bu})\leq 1$  for any $\bu$, it only remains to show that the value of the term in the brackets, i.e., $\prod_{{\bu \in \B_{\a,t}}} (1-  x_{2,\bu}) + \sum_{{\bu \in \B_{\a,t}}}  x_{2,\bu} \prod_{\bu' \neq \bu} (1- x_{2,\bu'})$, is always in the range of $[0,1]$. To get this, it suffices to show that there exists an event whose probability can be written as $\prod_{{\bu \in \B_{\a,t}}} (1-  x_{2,\bu}) + \sum_{{\bu \in \B_{\a,t}}}  x_{2,\bu} \prod_{\bu' \neq \bu} (1- x_{2,\bu'})$. For any $\bu$ define a Bernoulli random variable with mean $x_{2,\bu}$. Observe that the aforementioned term is equal to the probability of the event in which at most one of these variables is equal to one, assuming that they are independent.} Thus, we obtain Equation~\eqref{eq:ijefjier}, which
 allows us to assume without loss of generality
  that $\sum_{{\bu \in \B_{\a,t}}} x_{1,\bu} + \frac{1}{2} 
  \sum_{{\bu \in \B_{\a,t}}} x_{2,\bu} {=1}$. As a result, we have
   $$\begin{aligned} &  \;\;\;\;\;\;\; \;\sum_{\mathclap{\{p: p \in \mathcal P_{\a}, \Rev(p) \geq t\}}} \;\;\;\;\; s_{\a,p}^{\star} -
    \Pr\left[ \Rev_{\a}(\rr) \geq t \right] \leq   \prod_{{\bu \in \B_{\a,t}}} (1- x_{1,\bu}) \left[\prod_{{\bu \in \B_{\a,t}}} (1-  x_{2,\bu}) + \sum_{{\bu \in \B_{\a,t}}}  x_{2,\bu} \prod_{\bu'
  \neq \bu} (1- x_{2,\bu'})\right],  \end{aligned}$$
  where $\sum_{{\bu \in \B_{\a,t}}} x_{2,\bu} = 2\qq$, $\sum_{{\bu \in \B_{\a,t}}} x_{1,\bu} = 1-\qq$. Here, $\qq  \in [0,1]$.  To complete the proof,  we simply use that:
  $\prod_{{\bu \in \B_{\a,t}}} (1- x_{1,\bu}) \leq e^{-{\sum_{b \in \B_{\a,t}}} x_{1,\bu}} = e^{\qq-1}$. Given how we constructed
  the variables $x_{2,\bu}$,  we also need $x_{2,\bu} \leq \qq$. Hence, 
  $$\begin{aligned} & \;\;\;\;\;\;\; \;\; \sum_{\mathclap{\{p: p \in \mathcal P_{\a}, \Rev(p) \geq t\}}} \;\;\;\;\; s_{\a,p}^{\star} -
    \Pr\left[ \Rev_{\a}(\rr) \geq t \right] ~\leq~   e^{\qq-1} \left[\prod_{{\bu \in \B_{\a,t}}} (1-  x_{2,\bu}) + \sum_{{\bu \in \B_{\a,t}}}  x_{2,\bu} \prod_{\bu'
  \neq \bu} (1- x_{2,\bu'})\right],  \end{aligned}$$
where $\sum_{{\bu \in \B_{\a,t}}} x_{2,\bu} = 2\qq$ and  $x_{2,\bu} \leq \qq$ for any ${\bu \in \B_{\a,t}}$. 

\subsection{Approximation Factor} \label{sec:approx_factor}

In this section,  we will show that for any given $\qq\in [0, 1]$, we have
 \[\optf{\qq}{m} \leq 2 \left(\sqrt{2}-1\right) e^{\sqrt{2}-2}\,,\]
 where $\optf{\qq}{m}$ is defined in Equation (\ref{eq:opt_q_m}). {
Since the constraints of Program (\ref{eq:opt_q_m}) are linear in $x_i$'s, the
first order conditions of Karush-Kuhn-Tucker (KKT) are a necessary condition for
optimality \cite{bertsekas1999nonlinear}. Let
$$F(x, \qq) = e^{\qq-1}\left[ \prod_{i\in [n]} (1-x_{i}) + \sum_{i\in  [n]}
x_{i} \prod_{j\in [n] , j\neq i} (1-x_{j}) \right].$$
Observe that  $F(x, \qq)$ is the objective function of  $\optf{\qq}{m}$.
 Then, according to the KKT conditions, the optimal solution must satisfy the following
constraints for some $\lambda \in \mathbb{R}$, $\mu,\eta \in \mathbb{R}^n_+$:
\begin{align}\label{kkt1}
  \nabla_x F(x, \qq) + \frac{\lambda}{2} {\bf 1} - \mu + \eta = 0&\\
\label{kkt3}
\sum_{i\in [n]} x_i = \frac{1}{2}\theta&\\
\label{kkt4}
\mu_i (x_i - \qq) = 0, \quad \forall i\in[n]&\\
\label{kkt5}
\eta_i x_i = 0, \quad \forall i\in [n]&\\
\label{kkt6}
0 \leq x_i \leq  \qq, \quad \forall i\in [n]\,.&
\end{align}
}
{where ${\bf 1} \in \mathbb{R}^n$ is the vector of all one.}

{It is enough to show that $ F(x, \qq) \leq 2 \left(\sqrt{2}-1\right)
e^{\sqrt{2}-2}$ for any tuple $(x,\qq, \lambda, \mu, \eta)$ satisfying the KKT
conditions. A simple consequence of the KKT condition is the following:
\begin{lemma}[KKT Condition]\label{lemma:eq_value} If $(x,\qq, \lambda, \mu, \eta)$  satisfies the KKT conditions for
  Problem (\ref{eq:opt_q_m}), then if $x_k$ and $x_t$ are such that $0 < x_k <
  \theta$ and $0 < x_t < \theta$, then $x_k = x_t$.
\end{lemma}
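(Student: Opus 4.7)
The plan is to exploit KKT stationarity at any two truly interior indices. Fix $k,t$ with $0<x_k,x_t<\theta$. Complementary slackness~(\ref{kkt4}) and~(\ref{kkt5}) forces $\mu_k=\mu_t=0$ and $\eta_k=\eta_t=0$, so the stationarity equation~(\ref{kkt1}) collapses to $\partial F/\partial x_k = \partial F/\partial x_t$ (each equal to $-\lambda/2$). The lemma thus reduces to the algebraic statement that this equality of partials, together with feasibility, forces $x_k=x_t$.

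Next I would evaluate the gradient in closed form. Introduce $P:=\prod_{j\in[n]}(1-x_j)$ and $S:=\sum_{j\in[n]} x_j/(1-x_j)$; the identity $x_i\prod_{j\neq i}(1-x_j)=x_iP/(1-x_i)$ collapses the objective to $F(x,\theta)=e^{\theta-1}P(1+S)$, and a direct differentiation yields
\[
\frac{\partial F}{\partial x_i}(x,\theta) \;=\; \frac{e^{\theta-1}\,P}{(1-x_i)^2}\,\bigl[(1+S)\,x_i - S\bigr].
\]
Setting this equal for $i=k$ and $i=t$, clearing denominators, and applying the algebraic identities $x_k(1-x_t)^2 - x_t(1-x_k)^2 = (x_k-x_t)(1-x_kx_t)$ and $(1-x_t)^2-(1-x_k)^2 = (x_k-x_t)(2-x_k-x_t)$ collapses the resulting symmetric relation to
\[
(x_k-x_t)\,\bigl[(1+S)(1-x_kx_t) - S(2 - x_k - x_t)\bigr]=0.
\]

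The concluding step, which I expect to be the main obstacle, is to show that the bracketed factor cannot vanish on the feasible set, so that the product can only vanish through $x_k=x_t$. Under the change of variables $y_i:=1/(1-x_i)$ the bracket rewrites as $\bigl[(y_k+y_t)-(1+S)\bigr]/(y_k y_t)$, so the task is to rule out $y_k+y_t=1+S$. Since $S=\sum_j(y_j-1)=(\sum_j y_j)-n$, this is equivalent to $\sum_{j\neq k,t} y_j = n-1$, and each $y_j\geq 1$ with equality iff $x_j=0$, so the possibility is quite constrained. If some additional index $\ell$ is also strictly interior with a value distinct from both $x_k$ and $x_t$, applying the same stationarity identity to the pairs $(k,\ell)$ and $(t,\ell)$ subtracts to $y_k=y_t$, yielding $x_k=x_t$ immediately. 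The delicate remaining case is when only two distinct interior values occur (possibly with multiplicity); there a purely local KKT argument does not rule out $y_k+y_t=1+S$, and I would close it by combining the equality constraint~(\ref{kkt3}) and the box constraints~(\ref{kkt6}) with an exchange/perturbation argument—moving mass between the two interior values along $\sum_j x_j=2\theta$ and showing $F$ strictly increases—contradicting the stationarity of $(x,\theta,\lambda,\mu,\eta)$ and forcing $x_k=x_t$.
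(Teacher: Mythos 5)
Your reduction via complementary slackness to $\partial F/\partial x_k=\partial F/\partial x_t$, your closed form for the gradient, and the factorization $(x_k-x_t)\bigl[(1+S)(1-x_kx_t)-S(2-x_k-x_t)\bigr]=0$ are all correct, and up to that point you are doing essentially the same computation as the paper: the paper shows that $x_k$ and $x_t$ are both roots of a single quadratic $p(y)=(SQ+\tfrac{\lambda}{2})-(SQ+Q+\lambda)y+\tfrac{\lambda}{2}y^2$ and asserts that $p$ has a unique root below $1$; your vanishing bracket is precisely the case in which $p$ has two distinct roots below $1$. You are in fact more careful than the paper here, since the paper's uniqueness claim rests on $d^2p/dy^2\ge 0$, while the stationarity equation forces the leading coefficient $\lambda/2$ (in the paper's sign convention) to be nonpositive, so $p$ is concave and $p(1)=-Q<0$ does not exclude two roots in $(0,\theta)$.

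The genuine gap is in your closing step, and it cannot be repaired as proposed because the statement is false in exactly the case you isolate. In your $y$-coordinates the bracket vanishes iff $S':=\sum_{j\ne k,t}x_j/(1-x_j)=1$, and a direct computation with $u=x_k$, $c=x_k+x_t$ gives $F=e^{\theta-1}P'\bigl[(1-c)(1+S')+c+(S'-1)u(c-u)\bigr]$ where $P'=\prod_{j\ne k,t}(1-x_j)$; thus when $S'=1$ the objective is \emph{constant} along the exchange direction, so no strict increase is available, and even a second-order increase would not contradict the first-order KKT conditions anyway. Concretely, for $n=3$, $\theta=1/2$, $x=(0.1,\,0.4,\,0.5)$ one checks $\partial F/\partial x_1=\partial F/\partial x_2=-0.5\,e^{-1/2}$ and $\mu_3=\partial F/\partial x_3+\lambda/2=0.08\,e^{-1/2}>0$ with $x_3=\theta$, so all KKT conditions hold while $x_1\ne x_2$ are both strictly interior. (Its objective value $0.75\,e^{-1/2}\approx 0.455$ still sits below $2(\sqrt2-1)e^{\sqrt2-2}$, so the paper's final bound is not endangered.) The correct way to finish is not to exclude asymmetric KKT points but to observe that in the degenerate case $S'=1$ the value of $F$ is unchanged by replacing $(x_k,x_t)$ with $(\tfrac{c}{2},\tfrac{c}{2})$ (a feasible move), so the maximum of Problem (\ref{eq:opt_q_m}) is always \emph{attained} at a point whose interior coordinates coincide; that weaker statement is all that Corollary \ref{cor:opt} actually uses.
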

}
\begin{proof} By conditions (\ref{kkt4}) and (\ref{kkt5}), we must have $\mu_k =
  \eta_k = 0$. Plugging that into condition  (\ref{kkt1}),  we
  get that \[\partial F / \partial x_k  + \lambda / 2 = 0\,.\]
  This implies that
\begin{align} \sum_{i\ne k} x_{i}
\prod_{j\neq i,k} (1-x_{j}) +\frac{\lambda}{2} =0\,. \nonumber\end{align}
Let $Q = \prod_{i\in[n]} (1-x_{i})$ and $S= \sum_{i\in[n]} \frac{x_{i}}{1-x_{i}}$. Then, the above condition can be written as
\begin{align}
 \frac{Q}{1-x_{k}} \sum_{i\ne k} \frac{x_{i}}{1-x_{i}}
 +\frac{\lambda}{2} =0
~~ \Rightarrow~~ \frac{Q}{1-x_{k}}(S-\frac{x_{k}}{1-x_{k}} )
 +\frac{\lambda}{2} =0\,.\nonumber
 \end{align}
 This is further simplified as follows
\begin{align}
  \big(SQ+\frac{\lambda}{2}\big) -\big(SQ+Q+\lambda\big) x_{k}  +\frac{\lambda}{2} x^2_{k}= 0\,.\nonumber
\end{align}
  {The polynomial $p(y) := \big(SQ+\frac{\lambda}{2}\big) -\big(SQ+Q+\lambda\big)
  y  +\frac{\lambda}{2} y^2$ is quadratic with $\frac{d^2p}{d^2y} \geq 0$ and $p(1) = -Q < 0$. {Thus, $p(y) =0$ has an unique solution with $y< 1$.}  This  implies $x_k$  is
  uniquely determined as a function of $S$, $Q$, and $\lambda$. By the
  same argument, $x_t$ is also a solution to the same equation and hence $x_k =
  x_t$.}
\Halmos\end{proof}

{Lemma \ref{lemma:eq_value} leads to the following corollary.}

{
\begin{corollary}\label{cor:opt}
  We can bound $\optf{\qq}{} \leq \max_{k \in \mathbb Z, k \geq 2}
  \max [\optfone{\qq}{k}, \optftwo{\qq}{k} ]$,  where
  $$\optfone{\qq}{k} = e^{\theta-1}  \left(
  1-\frac{2\theta}{k} \right)^{k-1} \left( 1-\frac{2\theta}{k} + 2\theta
  \right)$$
  $$\optftwo{\qq}{k}  = e^{\theta-1}  \left[ \left( 1-\frac{\theta}{k}
  \right)^{k} + 
  \theta(1-\theta)\left( 1-\frac{\theta}{k} \right)^{k-1} \right]\,.$$
\end{corollary}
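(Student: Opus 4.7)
The plan is to exploit Lemma \ref{lemma:eq_value} to reduce any optimizer of Problem (\ref{eq:opt_q_m}) to a very restricted family of configurations parameterized by an integer $k$, and then to match the objective value in each such configuration to either $\optfone{\qq}{k}$ or $\optftwo{\qq}{k}$ for some $k \geq 2$.

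First I would observe that any optimal solution $\b{x}$ of Problem (\ref{eq:opt_q_m}) satisfies the KKT conditions, so by Lemma \ref{lemma:eq_value} all coordinates lying strictly in the interior $(0, \qq)$ share a common value $v$. Moreover, coordinates equal to zero contribute neither to the constraint nor to the objective (since $1-0=1$ and $0 \cdot (\cdot) = 0$), so they may be discarded without loss of generality. Thus the optimum is described by just three parameters: the number $a$ of coordinates equal to $\qq$, the number $b$ of coordinates equal to the common interior value $v \in (0, \qq)$, and $v$ itself.

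Next I would apply the equality constraint $\sum_i x_i = 2\qq$, which reads $a\qq + bv = 2\qq$. Since $v \geq 0$ and $\qq \geq 0$, this forces $a \in \{0, 1, 2\}$, and for each value of $a$ the value $v$ is pinned down. Specifically, $a = 0$ gives $v = 2\qq/b$ with $b \geq 2$ (the lower bound coming from $v \leq \qq$); $a = 1$ gives $v = \qq/b$ with $b \geq 1$; and $a = 2$ forces $b = 0$.

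Then I would plug each configuration into the objective $e^{\qq-1}[(1-\qq)^a (1-v)^b + a\qq (1-\qq)^{a-1} (1-v)^b + bv (1-\qq)^a (1-v)^{b-1}]$ and simplify. The $a=0$ case collapses to $e^{\qq-1}(1-v)^{b-1}[(1-v) + bv]$, which upon substituting $v = 2\qq/b$ is precisely $\optfone{\qq}{b}$. The $a=1$ case, after noticing that the two terms carrying $(1-v)^b$ combine via $(1-\qq)+\qq = 1$, reduces to $e^{\qq-1}[(1-v)^b + bv(1-\qq)(1-v)^{b-1}]$, which upon substituting $v = \qq/b$ is exactly $\optftwo{\qq}{b}$. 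Finally, the $a=2$ case yields $e^{\qq-1}(1-\qq)(1+\qq) = \optfone{\qq}{2}$, already captured in the outer maximum over $k \geq 2$. The only subtlety is the boundary case $b=1$ of the $a=1$ configuration (where $v = \qq$, so it coincides with $a=2, b=0$): the resulting value is $\optftwo{\qq}{1} = e^{\qq-1}(1-\qq^2)$, which equals $\optfone{\qq}{2}$, so restricting the outer maximum to $k \geq 2$ loses nothing. The main task is bookkeeping; there is no analytic obstacle beyond verifying the algebraic simplifications and handling the degenerate $k=1$ subcase through the identity with $\optfone{\qq}{2}$.
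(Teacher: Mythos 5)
Your proof is correct and follows essentially the same route as the paper: invoke the KKT conditions, use Lemma~\ref{lemma:eq_value} to force all strictly interior coordinates to a common value, and enumerate cases by the number of coordinates pinned at $\theta$ (your $a \in \{0,1,2\}$ matches the paper's Cases 1--3, with identical resulting values $\optfone{\qq}{b}$ and $\optftwo{\qq}{b}$). Your write-up is in fact slightly more careful than the paper's terse version, explicitly discarding zero coordinates and handling the degenerate $b=1$ boundary via the identity $\optftwo{\qq}{1} = \optfone{\qq}{2}$, but the underlying argument is the same.
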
}

\begin{proof}
  As stated earlier, in order to maximize the objective function $\optf{\qq}{} $,  it is enough to consider feasible
  solutions
  $x$ satisfying the KKT conditions. To do so, we use Lemma \ref{lemma:eq_value} to narrow
  down such solutions.

 {
  Since for any $i\in [n]$,  $x_i \leq \theta$ and $\sum_{i\in [n]} x_i = 2\theta$, we an only have the following three cases:
  \begin{itemize}
  \item Case 1:  Two variables
  in the support have value $\theta$ and by constraint $\sum_{i\in [n]} x_i = 2\theta$, the rest of them are zero. In that case,    
  $\optf{\qq}{} =\optfone{\qq}{2}$. 
  \item Case 2: One variable has value $\theta$ and by Lemma \ref{lemma:eq_value}, 
    {the rest $n-1\ge 2$}
  variables in the support have value $\theta/(n-1)$. In that case,  
$\optf{\qq}{} =\optftwo{\qq}{n-1}$.
    \item Case 3: All variables in the support are strictly below $\theta$. In this case,  by Lemma \ref{lemma:eq_value},  $x_i =
  \theta/n$ for $n \geq 3$, and the solution is $\optf{\qq}{} = \optfone{\qq}{n}$. 
  \end{itemize}}\Halmos\end{proof}

{
\begin{lemma} 
\label{lemma:boundcase1} For any $\qq\in [0,1]$ and $k\geq 2$, we have
  $\optfone{\qq}{k} \le 2 \left(\sqrt{2}-1\right) e^{\sqrt{2}-2}$.
\end{lemma}
\begin{proof}
  For each $k \geq 0$, define $\qq^*(k)= \arg \max_{\qq\in
  [0,1]}~\optfone{\qq}{k}$. By solving $\partial  \optfone{\qq}{k} / \partial
  \theta = 0$ we obtain the following expression for $\qq^*(k)$:
\begin{align}
  k^2 (2 \qq^*(k)-1)+4 (k-1) (\qq^*(k))^2  = 0\,. \nonumber	
\end{align} 
The aforementioned equation has two solutions, only one of which is in $[0,1]$. Thus,
\begin{align} 
  \label{q^*} \qq^*(k)=\frac{k \left(k-\sqrt{k^2+4 k-4}\right)}{4-4 k}.
\end{align}
  We need to show that for any $k\geq 2$, we have $\optfone{\qq^*(k)}{k} \leq 2
  \left(\sqrt{2}-1\right) e^{\sqrt{2}-2} \approx 0.461$. For $k =2$, we have $\optfone{\qq^*(k)}{k} = 2
  \left(\sqrt{2}-1\right) e^{\sqrt{2}-2}$. 
   For $k < 40$, we
  can verify this inequality numerically. For $k \ge 40$, we define and upper
  bound:
$$U(\qq, k) = \frac{2\qq+1}{e^{\qq+1}(1-\frac{2\qq}{k})}.$$
and show that for any $\theta \in [0,1]$ and  $k \geq 40$, 
  $$\optfone{\qq}{k} \leq U(\qq, k) \leq U(\qq,40) \leq 0.459 < 2 \left(\sqrt{2}-1\right) e^{\sqrt{2}-2}\,. $$
For the first inequality note that:
\begin{align} \optfone{\qq}{k}  &= 
e^{\qq-1}\left[\left(1-\frac{2\qq}{k}\right)^{k-1}\left(1+(k-1)\frac{2\qq}{k}\right)\right]
\\
&< e^{\qq-1}\left[\left(1-\frac{2\qq}{k}\right)^{k}\left(1-\frac{2\qq}{k}\right)^{-1}
\left(1+2\qq\right)\right] \leq U(\qq, k).\end{align}
For the second inequality, we use the fact that for any $\theta$, $U(\theta, k)$ is decreasing in $k$.
 To find an upper-bound for value of $U(\qq, 40) = \frac{(2 \qq+1) }{e^{\qq+1}
  (1-\frac{\qq}{20})}$, we take derivative of that which is $$\frac{\partial
  U(\qq, 40)}{\partial \qq} = \frac{20  \left(2 \qq^2-39
  \qq+21\right)}{e^{\qq+1} (\qq-20)^2}\,.$$ By solving $\frac{\partial  U(\qq,
  40)}{\partial \qq} = 0$, we obtain that maximum of $U(\qq, 40)$ is at $\qq=
  \frac{1}{4} \left(39-\sqrt{1353}\right)$ and $$U\left(\frac{1}{4}
  \left(39-\sqrt{1353}\right), 40\right) < 0.459\,.$$ 
  This completes the proof.
\Halmos\end{proof}
}

{
\begin{lemma} \label{lemma:case2} For any $\qq\in [0,1]$ and $k \ge 2$, we have 
$\optftwo{\qq}{k}~\leq~ 0.46 < 2 \left(\sqrt{2}-1\right) e^{\sqrt{2}-2}$.
\end{lemma}
\begin{proof}
Observe that
\begin{align}e^{1-\qq}\optftwo{\qq}{k} &=
  \left(1-\frac{\qq}{k}\right)^{k}+\qq(1-\qq)\left(1-\frac{\qq}{k}\right)^{k-1} 
 \\& \leq \left(1-\frac{\qq}{k}\right)^{k}+\frac{1}{4}\left(1-\frac{\qq}{k}\right)^{k}
 = \frac{5}{4}\left(1-\frac{\qq}{k}\right)^{k}\,,
 \end{align}
 where the first inequality holds because $\max_{\theta \in [0,1]} \theta (1-\theta) =\frac{1}{4}$ and $1-\frac{\theta}{k} \le 1$.
  Finally, note that $e^{\theta-1} \cdot \frac{5}{4}(1-\frac{\qq}{k})^{k}$ is decreasing for
  $\theta \in [0,1]$, Thus, we  can bound $\optftwo{\qq}{k}$ by the value of
  $e^{\theta-1} \cdot  \frac{5}{4}(1-\frac{\qq}{k})^{k}$  at $\theta=0$
  which is $5/(4e) < 0.46$.
\Halmos\end{proof}
}

\section{Tightness of the Analysis}\label{sec:tight}

In this section, we show that the analysis of our algorithm is tight, i.e.,
we construct an example for which 
 the performance of the algorithm matches the $\apxf$ 
approximation factor.

To make the construction cleaner, we can define the weighted version of
our problem in which
each auction $\a \in \A$ has an associated weight $w_\a > 0$, and the objective is to
maximize $\sum_{\a \in \A} w_\a \cdot \rev{\a}{\b r}$. Note that if the weights
are integers, this is exactly the same as the original problem, replacing each weighted
auction by $w_\a$ (unweighted) copies. Even if $w_\a$'s are not integers, it is
easy to see that the algorithm and the analysis generalize with essentially no
change to the weighted case (the only modification involves adding weighs to the
objective function in the LP). {In other words, if the  objective were
the weighted revenue, we would still get \apxf~approximation factor by applying
a similar algorithm.} Furthermore, any lower bound to the weighted case
translates to the unweighted case by replacing a weighted auction $\a$ by
$\lfloor N w_a \rfloor$ unweighted copies for some large $N$.

\begin{theorem}[Tightness of the Analysis]\label{thm:tight}
  There is a weighted instance $\{w_\a\}_{\a \in \A}, \{\beta_{\a,
  \bu}\}_{\a \in \A, \bu \in \B}$ and an optimal LP solution $\b s, \b q$ such that
  $$\max\left( \E\left[\sum_\a w_\a \Rev_\a(\rr)\right], \sum_\a w_\a \Rev_\a(\b 0) \right) \leq \apxf \cdot \Rev(\b s).$$ 
\end{theorem}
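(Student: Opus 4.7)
The plan is to construct a weighted instance (or a sequence of them) in which every inequality used to derive the $\apxf$ approximation factor becomes tight, so the analysis collapses to equality. Since $\apxf = (1+c)^{-1}$ with $c = 2(\sqrt{2}-1)e^{\sqrt{2}-2}$, three separate estimates must simultaneously bind: the aggregation bound $\E[\Rev(\rout)] \geq \max(x, 1-cx) \cdot \esp^{\star}$ from Lemma~\ref{lem:initial}, which equals $\esp^{\star}/(1+c)$ only when $x := \sum_{\a} w_\a \beta^{(2)}_{\a} / \esp^{\star} = 1/(1+c)$; the per-auction bound of Lemma~\ref{lemma:apxf}, which by Corollary~\ref{cor:opt} and Lemma~\ref{lemma:boundcase1} is tight at the Case~1, $k=2$ KKT-optimizer $\theta^{\star} = \sqrt{2}-1$; and the Bernoulli estimate $\prod_{\bu}(1-x_{1,\bu}) \leq e^{-\sum_{\bu} x_{1,\bu}} = e^{\theta^{\star}-1}$ used inside the proof of Lemma~\ref{lemma:apxf}, which is tight only when the mass $1-\theta^{\star}$ is split among many buyers with vanishing marginals.

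Concretely, I would exhibit a family $\{\mathcal{I}_m\}_{m \geq 2}$ of instances indexed by the number $m$ of auxiliary ``light'' buyers. Each $\mathcal{I}_m$ has a bottleneck auction $\a^{\star}$ with two ``heavy'' buyers $\bu^{\dagger}_1, \bu^{\dagger}_2$ and $m$ light buyers $\bu^{\ell}_1, \ldots, \bu^{\ell}_m$, whose bids are tuned so that the profile LP can place mass $\theta^{\star}/2$ on each of the two orderings $(\bu^{\dagger}_i, \bu^{\dagger}_j, 0, 0)$ (realizing $x_{2,\bu^{\dagger}_i} = \theta^{\star}$) and mass $(1-\theta^{\star})/m$ on each of the $m$ ``light-winner'' profiles $(\bu^{\ell}_k, \cdot, r^{\star}, \cdot)$ with $r^{\star} \geq \beta^{(2)}_{\a^{\star}}$ (realizing $x_{1,\bu^{\ell}_k} = (1-\theta^{\star})/m$). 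To make this LP solution optimal and to control the marginals $\b{q}$, I would introduce auxiliary single-buyer auctions in which each heavy $\bu^{\dagger}_i$ appears alone at a much larger bid $B'$ and the LP prefers reserve $B'$; with suitably chosen weights, this shifts the heavy-buyer marginals to $q_{\bu^{\dagger}_i, B'} = \theta^{\star}$ and $q_{\bu^{\dagger}_i, 0} = 1-\theta^{\star}$, while not disturbing the other marginals. Finally, a small number of filler auctions involving fresh buyers are used to tune the aggregate second-highest bid $\sum_{\a} w_\a \beta^{(2)}_\a$ so that it equals $\Rev(\b{s})/(1+c)$ exactly, forcing $x = 1/(1+c)$.

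The main obstacle is the delicate joint calibration: bids, weights, and profile masses must be chosen so that the LP is simultaneously feasible, optimal, and tight with the algorithm's analysis, which in particular requires perturbing bids just enough that a heavy buyer assigned the high reserve $B'$ fails to clear in $\a^{\star}$ while a light buyer assigned $r^{\star}$ clears and becomes the price setter. Once the instance is specified, $\Rev(\b{s})$ and $\sum_{\a} w_\a \Rev_\a(\b{0})$ can be read off directly from the bids and weights, and the product structure of the independent rounding procedure yields $\E\bigl[\sum_{\a} w_\a \Rev_\a(\rr)\bigr] = \Rev(\b{s})/(1+c) + o_m(1)$ via the limit $(1-(1-\theta^{\star})/m)^m \to e^{\theta^{\star}-1}$, which reproduces the tight value $c$ from Corollary~\ref{cor:opt}. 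Consequently, $\max\bigl(\E[\sum_{\a} w_\a \Rev_\a(\rr)], \sum_{\a} w_\a \Rev_\a(\b{0})\bigr) \leq (\apxf + o_m(1)) \cdot \Rev(\b{s})$, and taking $m$ large enough produces an instance satisfying the theorem.
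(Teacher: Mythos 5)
Your overall plan is the paper's own construction: a bottleneck auction with two ``heavy'' buyers carrying the $x_2$-mass $2\theta^{\star}$ at zero reserve and $m$ ``light'' buyers splitting the $x_1$-mass $1-\theta^{\star}$ into vanishing pieces (so that $\prod(1-x_{1,\bu})\to e^{\theta^{\star}-1}$), together with low-weight auxiliary auctions that make the LP solution optimal and a filler auction on a fresh buyer that pads the LP value without contributing to the zero-reserve revenue, so that $\sum_\a w_\a \beta^{(2)}_\a = \Rev(\b s)/(1+c)$. The paper realizes exactly this with three weighted auctions and $k+3$ buyers and takes $k\to\infty$.

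There is, however, one concrete miscalibration that would break tightness: you set the heavy-buyer marginals to $q_{\bu^{\dagger}_i,B'}=\theta^{\star}$ and $q_{\bu^{\dagger}_i,0}=1-\theta^{\star}$. For the per-auction bound to bind you need the rounding probability $y_{2,\bu^{\dagger}_i}=\Pr[\text{reserve of }\bu^{\dagger}_i \text{ is } 0]$ to equal the LP profile mass $x_{2,\bu^{\dagger}_i}=\theta^{\star}$, i.e.\ $q_{\bu^{\dagger}_i,0}=\theta^{\star}$ (the minimum the consistency constraint allows) and $q_{\bu^{\dagger}_i,B'}=1-\theta^{\star}$ --- the reverse of what you wrote. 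With your assignment the probability that both heavy buyers clear at reserve $0$ is $(1-\theta^{\star})^2\approx 0.343$ rather than $(\theta^{\star})^2\approx 0.172$, the bottleneck auction fails only with probability $e^{\theta^{\star}-1}\bigl(4\sqrt{2}-5\bigr)\approx 0.366$ instead of $e^{\theta^{\star}-1}\bigl(2(\sqrt 2 -1)\bigr)\approx 0.461$, and the instance certifies only a ratio of about $0.73$, not $\apxf$. Relatedly, your auxiliary single-buyer auctions for the heavy buyers must then justify the \emph{large} marginal $1-\theta^{\star}$ at the high reserve; the paper does this with a single $\epsilon$-weight auction in which all $k+2$ non-dummy buyers bid $1+\epsilon$ and the leftover $q$-mass of every buyer sits at reserve $1+\epsilon$, which keeps the bookkeeping in one place. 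With the marginals corrected, the rest of your calibration (two heavy buyers at $x_{2}=\theta^{\star}$, $m\to\infty$ light buyers, and $x=1/(1+c)$ via the filler auction) reproduces the paper's computation, where both the zero-reserve revenue and the expected rounded revenue equal $\Rev(\b s)/(1+c)$.
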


\begin{proof}
Fix $\theta = \sqrt{2}-1$ and $c = (1-\theta^2) e^{\theta-1}$.
Consider an instance with three weighted auctions and $n = k+3$ buyers described
  by the following table:\medskip

\begin{center}
\begin{tabular}{|l||*{6}{c|}}\hline
\backslashbox{Weights $w_\a$}{Bids}
&\makebox[3em]{$\beta_{\a,1}$}&\makebox[3em]{$\hdots$}&\makebox[3em]{$\beta_{\a,k}$}
&\makebox[3em]{$\beta_{\a,k+1}$}&\makebox[3em]{$\beta_{\a,k+2}$} &  \makebox[3em]{$\beta_{\a,k+3}$}\\\hline\hline
$1/(c+1)$ & $1$ & $\hdots$ & $1$ & $1$ & $1$ & $0$\\\hline
 $c/(c+1)$ & $0$ & $\hdots$ & $0$ & $0$ & $0$ & $1+\epsilon$\\\hline
$\epsilon$  & $1+\epsilon$ & $\hdots$ & $1+\epsilon$ & $1+\epsilon$ & $1+\epsilon$ & $0$\\\hline
\end{tabular}
\end{center}

\medskip
Now, consider the following solution to the \ref{LP}. For the first
auction, 

\begin{itemize}
\item the profile $p = (i,\bu_{0},1,0)$ has $s_{\a,p} = (1-\theta)/k$ for $i \in [k]$. In
this profile, the $i$-th buyer is reserve priced at $1$ and the second buyer is
the dummy buyer;
\item the profile $p = (k+1, k+2, 0, 0)$ has weight $s_{\a,p} = \theta$. In this
profile, both buyers $k+1$ and $k+2$ have zero reserve prices. Observe that  the revenue under this profile  is
$1$ due to the highest second price.
\end{itemize}

For the second auction, we consider only one profile:
\begin{itemize}
\item the profile {$p = (k+3,\bu_{0},1+\epsilon,0)$} has $s_{\a,p} = 1$. In
this profile, the $(k+3)$-th buyer is reserve prices at $1$ and the second buyer is
the dummy buyer. 
\end{itemize}

And for the third auction, we have:

\begin{itemize}
\item the profile $p = (i,\bu_0,1+\epsilon,0)$ has $s_{\a,p} = \theta/k$ for $i \in [k]$. In
this profile, the $i$-th buyer is reserve priced at $1+\epsilon$ and the second buyer is
the dummy buyer.
\item the profile $p = (k+1, k+2, 1+\epsilon, 1+\epsilon)$ has weight $s_{\a,p} = 1- \theta$. In this
profile, both buyers $k+1$ and $k+2$ have reserve price $1+\epsilon$ and thus the revenue is
$1+\epsilon$.
\end{itemize}

For this solution, we define the $q$ variables as follows.

\begin{itemize}
\item For buyers $i \in [k]$, we set {$q_{i,1} = (1-\theta)/k$} and $q_{i,1+\epsilon} = 1 - q_{i,1}$.
\item For buyers $i = k+1, k+2$, we set $q_{i,0} = \theta$ and $q_{i,1+\epsilon} =  1 - q_{i,1}$.
\item For buyer $k+3$, {we  set $q_{k+3,1+\epsilon} = 1$.}
\end{itemize}

It is easy to see that this solution is feasible and that it is the optimal
solution to Problem \eqref{LP}. {This is so because for any auction, any profile that has a positive weight yield the maximum revenue for that auction.}  Note that for simplicity in the formulation of revenue, we can remove the terms that are a factor of $\epsilon$ since they can be arbitrary small and are negligible. 
We argue that the rounding procedure produces a $1/(c+1)$
approximation. First notice that the vector of zero reserves obtains revenue of
$1/(c+1)$.

Now, we compute the expected revenue from rounding. After rounding, the reserve of
any buyer $i\in [k]$ is either $1$ or $1+\epsilon$, the reserve of  buyers
$k+1$ and $k+2$ is either zero or $1+\epsilon$, and reserve of buyer $k+3$
is always $1+\epsilon$. Thus, {by letting $\epsilon$ go to zero,}  the expected revenue from rounding is given by
$$ \frac{1}{c+1} \left[ 1- \left(1-\frac{1-\theta}{k} \right)^k
\cdot (1-\theta^2) \right]+\frac{c}{c+1} \,,$$
where the first term is the revenue of first auction and the second term, i.e.,
$\frac{c}{c+1}$, is the revenue of the second  auction.\footnote{{We do not
include the revenue of the third auction because we would like to take
$\epsilon$ to zero and in that case, the revenue of the third auction approaches
zero.}} {To see why the latter holds note that in the first auction, we
always  get a revenue of one unless none of the first $k$ buyers have  a reserve
of one and neither buyers $k+1$ nor buyer $k+2$ have a reserve of zero. }  As $k
\rightarrow \infty$,  the expected revenue after rounding becomes:
$$ \frac{1}{c+1} \left[ 1-e^{\theta-1}
\cdot (1-\theta^2) \right]+\frac{c}{c+1}  =    \frac{1-c}{c+1}+\frac{c}{c+1} = \frac{1}{c+1}\,,   $$
{where the first equation holds because $c = (1-\theta^2) e^{\theta-1}$. The
above equation is the desired result because the optimal revenue is at most $1$
and  $1/(c+1) = \apxf$. The latter follows from $c = (1-\theta^2)
e^{\theta-1}$ and $\theta =\sqrt{2}$. } \Halmos\end{proof}

\section{Integrality Gap}\label{sec:gap} 
\revision{In this section, we give an upper-bound of $0.828$ for the integrality gap of the LP. This implies that any rounding procedure for our LP formulation will
obtain at most $0.828$ fraction of the optimal value of the LP. In particular, we show Theorem \ref{thm:integrality_gap}, which we restate here for convenience.}

{
\begin{manualtheorem}{2}[Integrality Gap of \ref{LP}] 
There exists a data-set of bids $\{\beta_{\a, \bu}\}_{\a\in \A, \bu\in \B}$ for
  which the integrality gap of \ref{LP} is at least $ 2(\sqrt{2}-1) \approx  0.828$. That is, 
  $$\esp^{\star} \leq 2(\sqrt{2}-1) \cdot {\sf LP}^\star,$$
  where  ${\sf LP}^\star$ is the optimal fraction solution of the \ref{LP} and
  $\esp^{\star}$ is its optimal integral solution.
\end{manualtheorem}
}

\begin{proof}
Given $n$ buyers, an integer $k > 0$, $\delta = 1/k$ and a constant
$\lambda \in (0,1)$ to be determined later, consider an instance built as
follows:
\begin{itemize}
  \item {\textbf{Type one Auctions:}} For any buyer, $\bu \in [n]$,  we have an auction in which all
  the bids are zero except the bid of buyer $\bu$. Precisely, buyer $\bu$ has a
    bid of {$\lambda n$}.
  \item {\textbf{Type two Auctions:}} For any pair of buyers $\bu_1$ and $\bu_2$, there are $k$ copies
    of an auction in which $\bu_1$ and $\bu_2$ bid $\delta =  1/k$ and the rest
of the buyers bid 0. {We assume that $\lambda n > \delta$. }
\end{itemize}

For this instance, consider the fractional solution that assigns $s_{a, p} =
1/2$ for any auction $a$ of type two and profiles $ (\bu_1, \bu_0, \delta, 0)$
and $(\bu_2, \bu_0, \delta, 0)$. {For the rest of the valid profiles of
auction $\a$, we set $s_{a, p} $ to zero.} Note that $\bu_1$ and $\bu_2$ are the
buyers with nonzero bids in auction $\a$ and $\bu_0$ is a dummy  buyer.
Moreover, for any auction $\a$ of type one, in which buyer $\bu$ has a nonzero
bid, we have $s_{a, p}= 1/2$ for profile $p=(\bu, \bu_0, {\lambda n}, 0)$. {For
the rest of the valid profiles of this auction, we set $s_{a, p} $ to zero. } In
this solution for any buyer $\bu$, we have $q_{\bu, \delta} = 1/2$ and $q_{\bu,
\lambda n} = 1/2$. One can simply verify that this solution satisfies all the
constraints of the LP and as a result, it is a valid fractional solution. The
optimal value of the  LP  is therefore bounded by:

  {
    $$ {\sf LP}^\star \geq \sum_\a \Rev_\a(\mathbf{s}) = n \cdot \frac{\lambda}{2} n + {n\choose2} \cdot
  k \cdot \delta = \frac{1+\lambda}{2} \cdot n^2 + o(n^2)\,,$$
  where the first term corresponds to the revenue from auctions of type one and
  the second term corresponds to the revenue of auctions of type two. To bound $\esp^{\star}$,
we note that in the optimal solution of Problem (\ref{eq:opt}), the reserve of each buyer is either
  $\delta$ or $\lambda n$. Given that the buyers are symmetric, the value of the
  optimal solution depends only on the number of buyers with each reserve. Let $t$ be the number of buyers with reserve $\lambda n$. Then, we
  can write:
  {$$ \esp^{\star} = \max_{0 \leq t \leq n} \left[ t \cdot \lambda n +(n-t)\cdot \delta+ {n \choose 2} - {t \choose 2}\right].$$}
 {By taking $\delta\rightarrow 0$, we obtain,}
   $${ \esp^{\star} = \max_{0 \leq t \leq n} \left[ t \cdot \lambda n+ {n \choose 2} -
  {t \choose 2}\right].}$$
  Since the term inside the maximum is a quadratic function of $t$, the optimal
  integral solution should be $t = \lambda n + o(n)$. This is so because the optimal integral
  solution $t$ deviates from the optimal fractional solution (which is {$\lambda
  n +1/2$}) by at most $1$. Substituting that in the expression of
  $\esp^{\star}$, we get
  $$  \esp^{\star} = \frac{1+\lambda^2}{2} \cdot n^2 + o(n^2)\,. $$
  Taking $n \rightarrow \infty$, we get
  $$\frac{\esp^{\star}  }{ {\sf LP}^\star } \leq \frac{(1 + \lambda^2) n^2 +
  o(n^2)}{(1 + \lambda) n^2 + o(n^2)} \rightarrow \frac{1 +
  \lambda^2}{1+\lambda}. $$
  We can choose the parameter $\lambda = \sqrt{2}-1$ to minimize the above
  expression, which leads to a ratio of $2(\sqrt{2}-1) \approx 0.828$.
}
\end{proof}

\section{{Numerical Studies}}
\label{sec:numerical}
{In this section, we evaluate our Pro-LPR algorithm on synthetic data-sets. As a benchmark, we use the optimal value of Problem \eqref{LP} and 
the greedy algorithm proposed by \cite{RW16}. (Recall that 
  optimal value of Problem \eqref{LP} provides an upper bound on the revenue obtained by any vector of reserve prices.)
We assess the performance of our algorithm in three settings, 
 where in the first (respectively third) setting, the bids  are negatively (respectively positively) correlated across buyers. In the second setting,  bids are independent of each other. }
 
 {\textbf{Synthetic Data-sets:} We assume that the number of buyers $n=2$. By setting $n=2$, we provide a  fair comparison between Pro-LP and greedy algorithms as  with $n=2$, the Pro-LP algorithm, similar to the greedy algorithm, only uses the highest and second-highest bids in each auction.
 The bids in each auction $\a$ are drawn from log-normal distributions. Note that  log-normal distributions are proved to be  a good fit for the advertisers' valuations in online advertising markets; see, for example,  \cite{edelman2006optimal, edelman2007internet, xiao2009optimal}, and \cite{balseiro2014yield}.
 Let $\sigma =0.1$ and define matrix $\Sigma$ as follows 
 \[\Sigma =\sigma^2 \left( \begin{array}{cc}
1 & w  \\
w & 1   \end{array} \right)\,.\] 
To generate  bids in each auction $\a$, we first generate normal random variables, denoted by $Z$, with mean $(0, \mu)$ and variance $\Sigma$. The bids are then obtained by computing $e^Z$. Here, we focus on three values of $w$, $-0.2, 0$, and $0.2$. When $w=0$, bids are independent of each other and when $w>0$ (respectively $w<0$), bids are positively (respectively negatively) correlated. More precisely,
the correlation coefficient between bids of the two buyers  is equal to $\frac{(\exp(w\sigma^2)-1)}{(\exp(\sigma^2)-1)}$.  We now comment on the parameter $\mu$. 
We draw $\mu$ randomly from a uniform distribution in the range of $[0,1]$, where each value of $\mu$ represents one problem instance. We consider $50$ problem instances for each value of $w\in \{-0.2, 0, 0.2\}$. Our generated data-set consists of $100$ auctions for each value of $\mu$ and $w$. (Note that bids are independent across auctions.) After generating the data-set, we consider the $30$ equally spaced values between zero and the maximum bid in the  data-set as candidate reserve prices. We use the same set of candidate reserve prices in Problem \eqref{LP} and the greedy algorithm of  \cite{RW16}.}

{\textbf{Evaluation:} We evaluate the revenue of Pro-LPR and greedy algorithms
  in two ways. First, we compare their revenue to the optimal value of Problem \eqref{LP} on the original data-set that we constructed to calculate the reserve prices. We refer to this data-set as the \emph{training data-set}.  To evaluate our algorithm on the training data-set, we draw $200$ vector of reserve prices from distribution $q^\star$, and choose the best of these reserve prices and zero reserve prices. 
Here, $q^\star$ is the optimal solution to the problem (\ref{LP}) when the input is the training data-set. We then report the average revenue of these $200$ random draws as the revenue of our algorithm on the training data-set. Figure \ref{fig:compare_to_opt} shows the distribution of  normalized revenue of the Pro-LPR and 
	 greedy algorithms on the training data-sets when bids are
	independent ($w=0$), positively correlated ($w=0.2$), and negatively correlated ($w=-0.2$). Here, the normalization is done with respect to the upper bound, i.e., the optimal value of Problem \eqref{LP}. 
	We observe that in the worst problem instance, our algorithm obtains a $0.98$ fraction of the upper bound, whereas the greedy algorithm in the worst case only obtains approximately an  $0.82$ fraction of the upper bound. Furthermore, while our algorithm matches the upper bound in at least half of the problem instances, the greedy algorithm does not meet the upper bound even in its best problem instance. }

\begin{figure}
	\centering
	\includegraphics[width=4.5in]{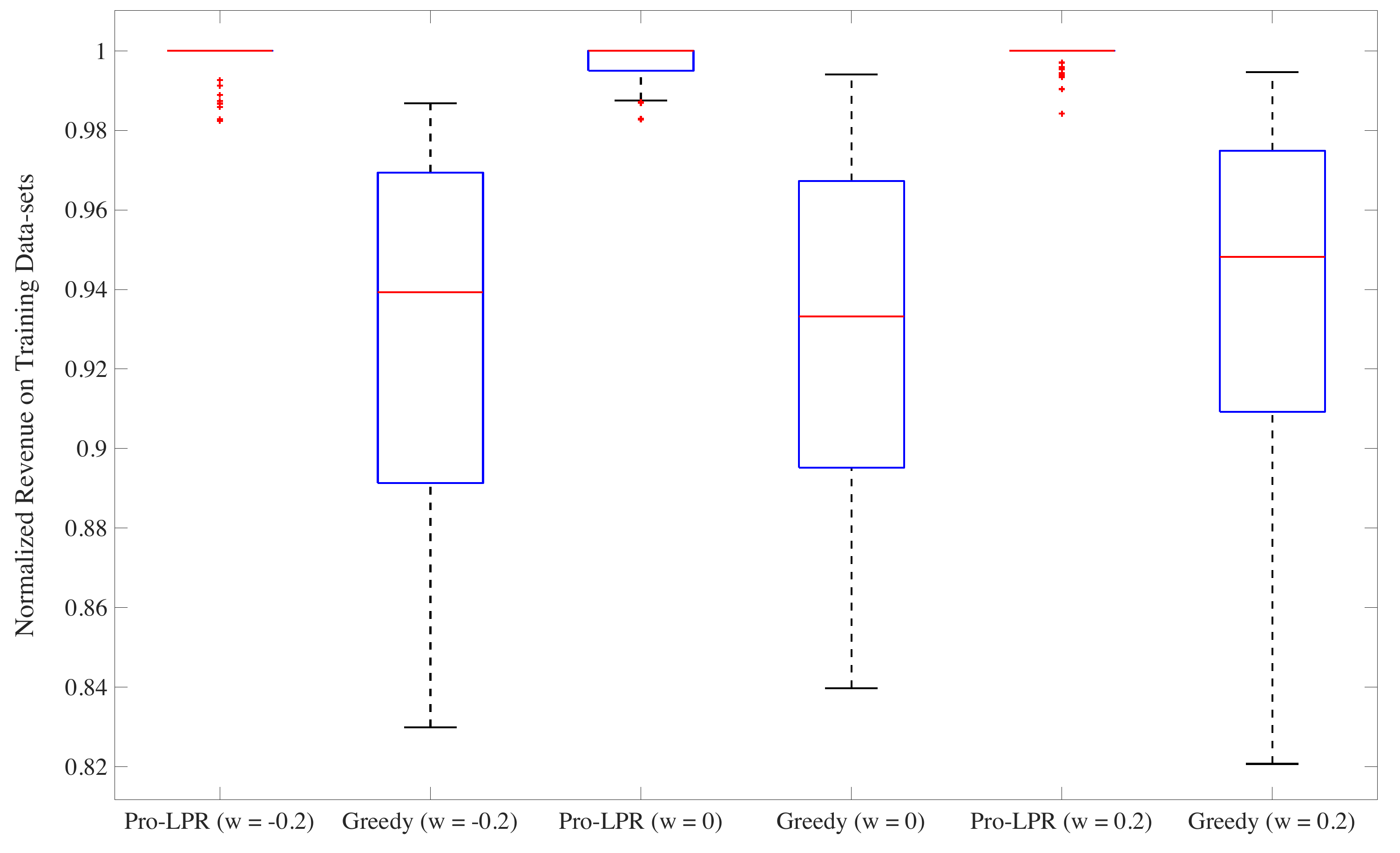}
	\caption{
	The normalized revenue of the Pro-LPR algorithm and greedy algorithm of \cite{RW16} on the training data-sets when bids are
	independent ($w=0$), positively correlated ($w=0.2$), and negatively correlated ($w=-0.2$). Here, the normalization is done with respect to the optimal value of Problem \eqref{LP}, which is an upper bound on revenue of any vector of reserve prices. \label{fig:compare_to_opt}} 
\end{figure}

{Second, we evaluate the revenue of our algorithm on \emph{test data-sets}. Specifically, for each value of $(\mu, w)$, we construct $100$ test data-sets where each test data-set consists of $100$ auctions whose bid distributions are the same as the training data-set with the same value of $(\mu, w)$. On the test data-sets, the upper bound is irrelevant. Thus,  we only report the revenue gain of our algorithm relative to the greedy algorithm. To assess our algorithm, for each value of $(\mu, w)$, we use one of the $200$ random reserve prices that obtained  the highest revenue on the training data-set. 
Figure \ref{fig:revenue_gain} presents three boxplots, where each boxplot shows the distribution of the  revenue gain (in percentage) of our algorithm (relative to the greedy algorithm) on test data-sets  for $w\in\{-0.2, 0, 0.2\}$. Note that in each boxplot we have $50\times 100= 5000$ data-points, where each data-point corresponds to one value of $\mu$ and one test data-set. Recall that for every value of $(\mu, w)$, we construct $100$ test data-sets and consider $50$ (random) values for $\mu$.  We observe that the median of the  revenue gain  is between $6\%$ and $7\%$. Furthermore, the $25$-th percentile in all the boxplots is between $2\%-3\%$. This means that  in $75\%$ of the problem instances,  our algorithm beats the greedy algorithm by at least $2\%-3\%$.    }

\begin{figure}
	\centering
	\includegraphics[width=4.5in]{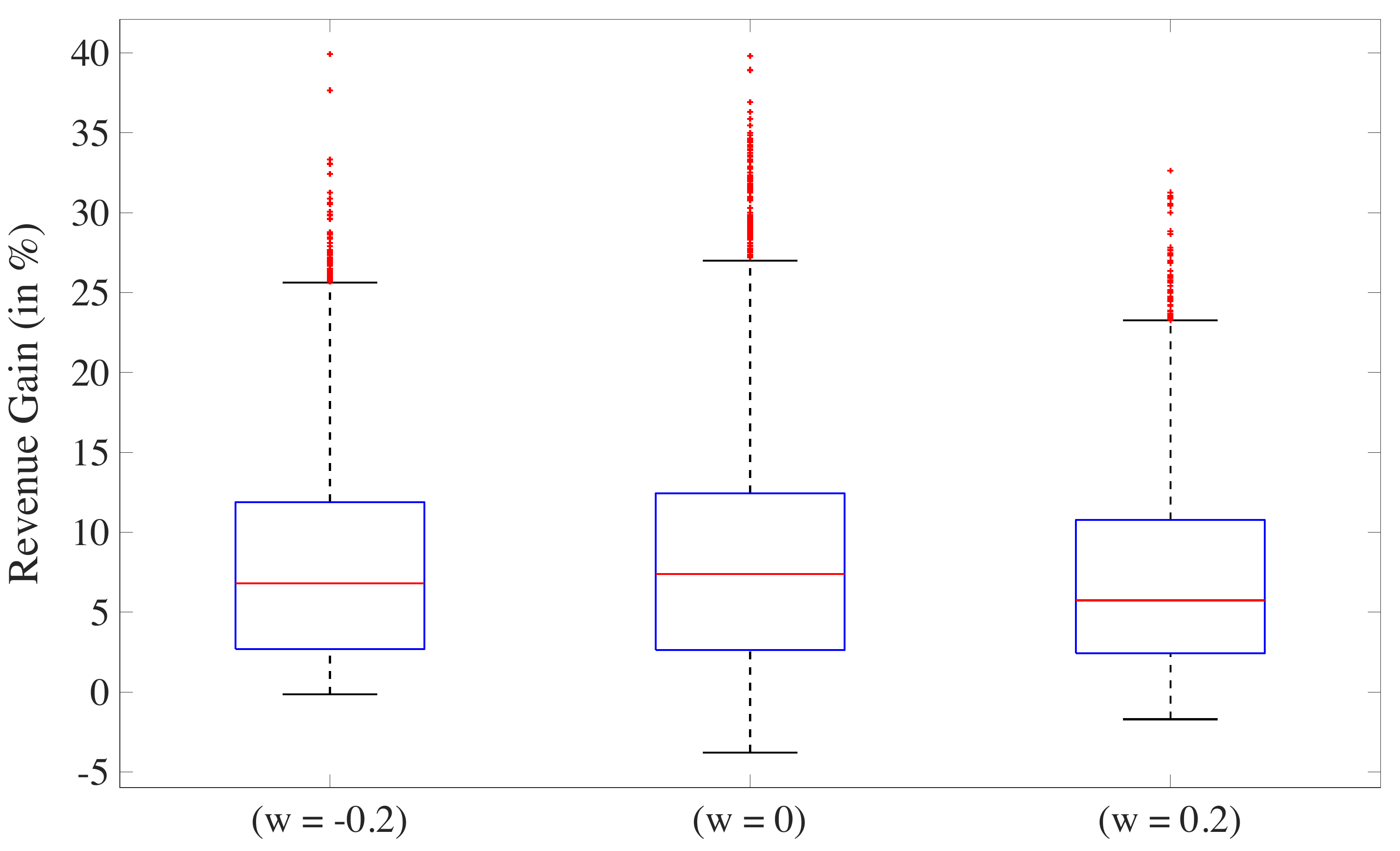}
	\caption{\label{fig:revenue_gain}
	The revenue gain of Pro-LPR algorithm (relative to the greedy algorithm of \cite{RW16})  on test data-sets when bids are independent ($w=0$), positively correlated ($w=0.2$), and negatively correlated ($w=-0.2$). }
\end{figure}

\section{Conclusion}\label{sec:conclusion}
In this paper, we take a data-driven approach to optimize personalized reserve  prices  in  eager second price auctions. We design  a polynomial time LP-based algorithm to optimize reserve prices on a given data-set of submitted bids and show that our algorithm  obtains more than \apxf~ fraction of the optimal revenue.  {Our algorithm, which  takes advantage of all the submitted bids to devise an effective reserve prices, highlights the importance of  deviating from greedy policies to optimize reserve prices.} {Furthermore, our theoretical results and numerical studies confirm that our algorithm  performs well  when bids are correlated across buyers or independent of each other. Nonetheless, it is an exciting future research direction to explore if one can design a data-driven algorithm with a better approximation factor when bids are independent of each other. }

\revision{Another exciting future research direction is to explore how to ``transform" our LP-based algorithm to an online learning algorithm with a sublinear approximate regret. 
For the greedy algorithm of \cite{RW16}, such transformation has shown to be possible using the Follow-the-Perturbed-Leader algorithm (\cite{RW16}) and Blackwell Approachability (\cite{niazadeh2020online}).  However, the greedy algorithm can lose up to a $1/2$ fraction of the optimal revenue, as opposed to a $1-\apxf$ fraction of the optimal revenue that our algorithm can lose. This makes transforming our algorithm to its online counterpart an interesting future research direction. Finally, it is worth exploring if/how correlated rounding techniques can improve our approximation factor. In particular, one can explore if it is possible to close the gap between our approximation factor of  $\apxf$ and the integrality gap of  $0.828$.}

{To sum up, we believe that our data-driven approach, as well as our LP-based algorithm can also be applied to a wider class of problems with revenue objective, and we hope the framework in this paper serves as a starting point for designing other data-driven algorithms.}

\bibliographystyle{alpha}
\bibliography{ref}
	
\end{document}